\newcommand{\dfn}{\stackrel{\triangle}{=}}
\def\xv{\boldsymbol{x}}
\def\hv{\boldsymbol{h}}
\def\E{\mathbb{E}}
\def\vv{\boldsymbol{v}}
\def\dv{\boldsymbol{d}}
\def\Uc{\mathcal{U}}
\def\Lc{\boldsymbol{\mathcal{L}}}
\def\Zc{\mathcal{Z}}
\def\Tau{\mathcal{T}}
\def\Tau{\mathcal{T}}
\def\dvlambda{\boldsymbol{d_\lambda}}
\newcommand\smallV{
  \mathchoice
    {{\scriptstyle\mathcal{V}}}
    {{\scriptstyle\mathcal{V}}}
    {{\scriptscriptstyle\mathcal{V}}}
    {\scalebox{.7}{$\scriptscriptstyle\mathcal{O}$}}
  }
\newcommand{\defeq}{\triangleq}
\newtheorem{theorem}{Theorem}
\newtheorem{lemma}{Lemma}
\newtheorem{corollary}{Corollary}
\newtheorem{remark}{Remark}
\title{Fundamental Limits of Caching in Heterogeneous Networks with Uncoded Prefetching}
\author{Emanuele Parrinello, Ay\c{s}e \"{U}nsal and Petros~Elia
\thanks{The authors are with the Communication Systems Department at EURECOM, Sophia Antipolis, 06410, France (email: parrinel@eurecom.fr, unsal@eurecom.fr, elia@eurecom.fr). The work is supported by the European Research Council under the EU Horizon 2020 research and innovation program / ERC grant agreement no. 725929 (project DUALITY).}
\thanks{This work is to appear in part in the proceedings of ITW 2018. A preliminary version\nocite{MN14,WanTP15,YuMA16} of this work, focusing only on the single-stream setting, can be found in~\cite{PEU_arxiv_single}.}
}
\begin{document}
\clearpage
\maketitle
\thispagestyle{empty}


\begin{abstract}
The work explores the fundamental limits of coded caching in heterogeneous networks where multiple ($N_0$) senders/antennas, serve different users which are associated (linked) to shared caches, where each such cache helps an arbitrary number of users.  Under the assumption of uncoded cache placement, the work derives the exact optimal worst-case delay and DoF, for a broad range of user-to-cache association profiles where each such profile describes how many users are helped by each cache. This is achieved by presenting an information-theoretic converse based on index coding that succinctly captures the impact of the user-to-cache association, as well as by presenting a coded caching scheme that optimally adapts to the association profile by exploiting the benefits of encoding across users that share the same cache.

The work reveals a powerful interplay between shared caches and multiple senders/antennas, where we can now draw the striking conclusion that, as long as each cache serves at least $N_0$ users, adding a single degree of cache-redundancy can yield a DoF increase equal to $N_0$, while at the same time --- irrespective of the profile --- going from 1 to $N_0$ antennas reduces the delivery time by a factor of $N_0$. Finally some conclusions are also drawn for the related problem of coded caching with multiple file requests.
\end{abstract}

\begin{IEEEkeywords}
Caching networks, coded caching, shared caches, delivery rate, uncoded cache placement, index coding, MISO broadcast channel, multiple file requests, network coding.
\end{IEEEkeywords}

\section{Introduction \label{sec:intro}}
In the context of communication networks, the emergence of predictable content, has brought to the fore the use of caching as a fundamental ingredient for handling the exponential growth in data volumes.
A recent information theoretic exposition of the cache-aided communication problem~\cite{MN14}, has revealed the potential of caching in allowing for the elusive scaling of networks, where a limited amount of (bandwidth and time) resources can conceivably suffice to serve an ever increasing number of users.
\subsection{Coded Caching}
This exposition in~\cite{MN14} considered a shared-link broadcast channel (BC) scenario where a single-antenna transmitter has access to a library of $N$ files, and serves (via a single bottleneck link) $K$ receivers, each having a cache of size equal to the size of $M$ files.

In a normalized setting where the link has capacity 1 file per unit of time, the work in~\cite{MN14} showed that any set of $K$ simultaneous requests (one file requested per user) can be served with a normalized delay (worst-case completion time) which is at most $T = \frac{K(1-\gamma)}{1+K\gamma}$ where $\gamma \defeq \frac{M}{N} $ denotes the normalized cache size. This implied an ability to treat $K\gamma+1$ users at a time; a number that is often referred to as the cache-aided sum \emph{degrees of freedom} (DoF) $d_{\Sigma} \defeq \frac{K(1-\gamma)}{T}$, corresponding to a caching gain of $K\gamma$ additional served users due to caching.

For this same shared-link setting, this performance was shown to be approximately optimal (cf.~\cite{MN14}), and under the basic assumption of uncoded cache placement where caches store uncoded content from the library, it was shown to be exactly optimal (cf.~\cite{WanTP15} as well as \cite{YuMA16}).

Such high coded caching gains have been shown to persist in a variety of settings that include uneven popularity distributions~\cite{JiTLC14,NiesenMtit17Popularity,ZhangLW:18}, uneven topologies~\cite{BidokhtiWT16isit,ZhangE16b}, a variety of channels such as erasure channels~\cite{GhorbelKY:16}, MIMO broadcast channels with fading~\cite{ZE:17tit}, a variety of networks such as D2D networks~\cite{JiCM16D2D}, coded caching under secrecy constraints \cite{RPK+:16}, and in other settings as well~\cite{SenguptaTS15,CaoTXL16,RoigTG17a,CaoTaoMultiAntenna18,PiovClerckISIT18,BayatMC:18arxiv}. \nocite{lampiris2018lowCSIT,LampirisZE17} Recently some progress has also been made in ameliorating the well known subpacketization bottleneck; for this see for example~\cite{ShanmugamJTLD16it,JiSVLTC15,YanCTC:17tit,TangR:17isit,ShangguanZG:18tit,ShanmugamTD:17isit,LampirisEliaJsac18}.


\begin{figure}[t!]
\centering
\includegraphics[width=0.4\linewidth]{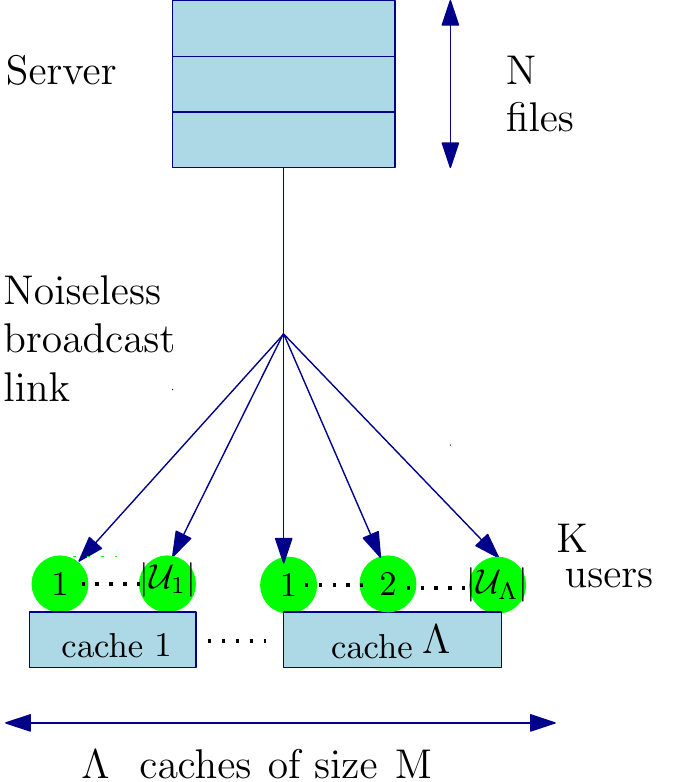}  \hspace{5pt}\includegraphics[width=0.4\linewidth]{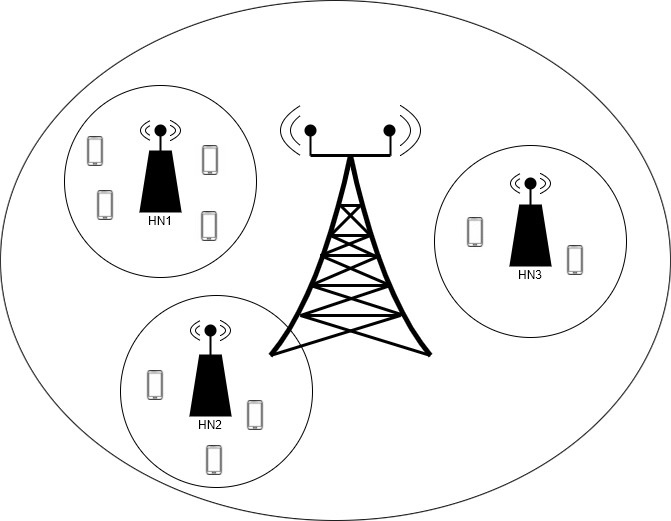}
    \label{system_pic}
  \caption{Shared-link or multi-antenna broadcast channel with shared caches.}
\end{figure}
\subsection{Cache-aided Heterogeneous Networks: Coded Caching with Shared Caches}
Another step in further exploiting the use of caches, was to explore coded caching in the context of the so-called heterogeneous networks which better capture aspects of more realistic settings such as larger wireless networks. Here the term \emph{heterogeneous} refers to scenarios where one or more (typically) multi-antenna transmitters (base-stations) communicate to a set of users, with the assistance of smaller nodes. In our setting, these smaller helper nodes will serve as caches that will be shared among the users. This cache-aided heterogeneous topology nicely captures an evolution into denser networks where many wireless access points work in conjunction with bigger base stations, in order to better handle interference and to alleviate the backhaul load by replacing backhaul capacity with storage capacity at the communicating nodes.

The use of caching in such networks was famously explored in the \textit{Femtocaching} work in~\cite{GSDMC:12}, where wireless receivers are assisted by helper nodes of a limited cache size, whose main role is to bring content closer to the users. A transition to coded caching can be found in~\cite{Diggavi_IT} which considered a similar shared-cache heterogeneous network as here, where each receiving user can have access to a main single-antenna base station (single-sender) and to different helper caches. In this context, under a uniform user-to-cache association where each cache serves an equal number of users, \cite{Diggavi_IT} proposes a coded caching scheme which was shown to perform to within a certain constant factor from the optimal. This uniform setting is addressed also in~\cite{MND13}, again for the single antenna case.
Interesting work can also be found in \cite{XuGW:18sharedArxiv} which explores the single-stream (shared-link) coded caching scenario with shared caches, where the uniformity condition is lifted, and where emphasis is placed on designing schemes with centralized coded prefetching with small sum-size caches where the total cache size is smaller than the library size (i.e., where $KM < N$).

\paragraph*{Coded caching with multiple file requests}
Related work can also be found on the problem of coded caching with multiple file requests per receiver, which --- for the single-stream, error free case --- is closely related to the shared cache problem here. Such work --- all in the shared-link case ($N_0=1$) --- appears in \cite{Ji2015CachingaidedCM,SenguptaT:17TCOMM} in the context of single-layer coded caching. Somewhat related work also appears in~\cite{ZhangWXWL16,KaramchandaniNMD16IT} in the context of hierarchical coded caching. Recent progress can also be found in~\cite{WeUlukus17} which establishes the exact optimal worst-case delay --- under uncoded cache placement, for the shared-link case --- for the uniform case where each user requests an equal number of files\footnote{This work explores other cases as well, such as that where the performance measure is the average delivery delay, for which various bounds are presented.}.  As a byproduct of our results here, in the context of worst-case demands, we establish the exact optimal performance of the multiple file requests problem for any (not necessarily uniform) user-to-file association profile.

\paragraph*{Current work}
In the heterogeneous setting with shared caches, we here explore the effect of user-to-cache association profiles and their non-uniformity, and we characterize how this effect is scaled in the presence of multiple antennas or multiple senders. Such considerations are motivated by realistic constraints in assigning users to caches, where these constraints may be due to topology, cache capacity or other factors. As it turns out, there is an interesting interplay between all these aspects, which is crisply revealed here as a result of a new scheme and an outer bound that jointly provide exact optimality results. Throughout the paper, emphasis will be placed on the shared-cache scenario, but some of the results will be translated directly to the multiple file request problem which will be described later on.

\subsection{Notation}
For $n$ being a positive integer, $[n]$ refers to the following set $[n]\triangleq \{1,2,\dots,n\}$, and $2^{[n]}$ denotes the power set of $[n]$. The expression $\alpha | \beta$ denotes that integer $\alpha$ divides  integer $\beta$. Permutation and binomial coefficients are denoted and defined by $P(n,k)\defeq  \frac{n!}{(n-k)!}$ and $\binom{n}{k}\defeq  \frac{n!}{(n-k)!k!}$, respectively.
For a set $\mathcal{A}$, $|\mathcal{A}|$ denotes its cardinality.
$\mathbb{N}$ represents the natural numbers. We denote the lower convex envelope of the points $\{(i, f(i)) | i \in [n]\cup \{0\}\}$ for some $n\in \mathbb{N}$ by $Conv(f(i))$. The concatenation of a vector $\vv$ with itself $N$ times is denoted by $(\vv \Vert \vv)_{N}$.  
For $n\in \mathbb{N}$, we denote the symmetric group of all permutations of $[n]$ by $S_n$.
To simplify notation, we will also use such permutations $\pi\in S_n$ on vectors $\vv \in \mathbb{R}^n$, where $\pi(\vv)$ will now represent the action of the permutation matrix defined by $\pi$, meaning that the first element of $\pi(\vv)$ is $\vv_{\pi(1)}$ (the $\pi(1)$ entry of $\vv$), the second is $\vv_{\pi(2)}$, and so on. Similarly $\pi^{-1}(\cdot )$ will represent the inverse such function and $\pi_s(\vv)$ will denote the sorted version of a real vector $\vv$ in descending order.

\subsection{Paper Outline}
In Section~\ref{sec:systemModel} we give a detailed description of the system model and the problem definition, followed by the main results in Section~\ref{sec:results}, first for the shared-link setting with shared caches\footnote{We also introduce a very brief parenthetical note that translates these results to the multiple file request scenario.}, and then for the multi-antenna/multi-sender setting. In Section~\ref{sec:scheme}, we introduce the scheme for a broad range of parameters. The scheme is further explained with an example in this section. We present the information theoretic converse along with an explanatory example for constructing the lower bound in Section~\ref{sec:converse}.  Lastly, in Section~\ref{sec:discussion} we draw some basic conclusions, while in the Appendix Section~\ref{sec:Appendix} we present some proof details.


\section{System Model\label{sec:systemModel}}
We consider a basic broadcast configuration with a transmitting server having $N_0$ transmitting antennas and access to a library of $N$ files $W^{1},W^{2},\dots ,W^{N}$, each of size equal to one unit of `file', where this transmitter is connected via a broadcast link to $K$ receiving users and to $\Lambda\leq K$ helper nodes that will serve as caches which store content from the library\footnote{
We note that while the representation here is of a wireless model, the results apply directly to the standard wired multi-sender setting. In the high-SNR regime of interest, when $N_0=1$ and $\Lambda = K$ (where each cache is associated to one user), the setting matches identically the original single-stream shared-link setting in \cite{MN14}. In particular, the file size and log(SNR) are here scaled so that, as in \cite{MN14}, each point-to-point link has (ergodic) capacity of 1 file per unit of time. When $N_0>1 $ and $\Lambda = K$ (again each cache is associated to one user), the setting matches the multi-server \emph{wireline} setting of \cite{ShariatpanahiMK16it} with a fully connected linear network, which we now explore in the presence of fewer caches serving potentially different numbers of users.}.
The communication process is split into $a)$ the cache-placement phase, $b)$ the user-to-cache assignment phase during which each user is assigned to a single cache, and $c)$ the delivery phase where each user requests a single file independently and during which the transmitter aims to deliver these requested files, taking into consideration the cached content and the user-to-cache association.

\paragraph{Cache placement phase}
During this phase, helper nodes store content from the library without having knowledge of the users' requests. Each helper cache has size $M\leq N$ units of file, and no coding is applied to the content stored at the helper caches; this corresponds to the common case of \textit{uncoded cache placement}. We will denote by $\mathcal{Z}_{\lambda}$ the content stored by helper node $\lambda$ during this phase. The cache-placement algorithm is oblivious of the subsequent user-to-cache association $\mathcal{U}$.

\paragraph{User-to-cache association}
After the caches are filled, each user is assigned to exactly \emph{one} helper node/cache, from which it can download content at zero cost. Specifically, each cache $
\lambda = 1,2,\dots,\Lambda$, is assigned to a set of users $\mathcal{U}_\lambda$, and all these disjoint sets \[\mathcal{U}\dfn \{\mathcal{U}_1,\mathcal{U}_2,\dots ,\mathcal{U}_\Lambda\}\] form the partition of the set of users $\{1,2,\dots,K\}$, describing the overall association of the users to the caches.

This cache assignment is independent of the cache content and independent of the file requests to follow. We here consider any arbitrary user-to-cache association $\mathcal{U}$, thus allowing the results to reflect both an ability to choose/design the association, as well as to reflect possible association restrictions due to randomness or topology. Similarly, having the user-to-cache association being independent of the requested files, is meant to reflect the fact that such associations may not be able to vary as quickly as a user changes the requested content.

\paragraph{Content delivery}
The delivery phase commences when each user $k = 1,\dots,K$ requests from the transmitter, any \emph{one} file $W^{d_{k}}$, $d_{k}\in\{1,\dots,N\}$ out of the $N$ library files.
Upon notification of the entire \emph{demand vector} $\dv=(d_1,d_2,\dots,d_{K})\in\{1,\dots,N\}^K$, the transmitter aims to deliver the requested files, each to their intended receiver, and the objective is to design a \emph{caching and delivery scheme $\chi$} that does so with limited (delivery phase) duration $T$, where the delivery algorithm has full knowledge of the user-to-cache association $\mathcal{U}$.

For each transmission, the received signals at user $k$, take the form
\begin{align}
y_{k}=\hv_{k}^{T} \xv + w_{k}, ~~ k = 1, \dots, K
\end{align}
where $\xv\in\mathbb{C}^{N_0\times 1}$ denotes the transmitted vector satisfying a power constraint $\E(||\xv||^2)\leq P$, $\hv_{k}\in\mathbb{C}^{N_0\times 1}$ denotes the channel of user $k$, and $w_{k}$ represents unit-power AWGN noise at receiver $k$. We will assume that the allowable power $P$ is high (i.e., we will assume high signal-to-noise ratio (SNR)), that there exists perfect channel state information throughout the (active) nodes, that fading is statistically symmetric, and that each link (one antenna to one receiver) has ergodic capacity $\log(SNR)+o(log(SNR))$.

\paragraph{User-to-cache association profiles, and performance measure}
As one can imagine, some user-to-cache association instances $\mathcal{U}$ may allow for higher performance than others; for instance, one can suspect that more uniform profiles may be preferable. Part of the objective of this work is to explore the effect of such associations on the overall performance. Toward this, for any given $\mathcal{U}$, we define the association \textit{profile} (sorted histogram)
$$\Lc=(\mathcal{L}_{1},\dots,\mathcal{L}_{\Lambda})$$ where $\mathcal{L}_{\lambda}$ is the number of users assigned to the $\lambda$-th \emph{most populated} helper node/cache\footnote{Here $\Lc$ is simply the vector of the cardinalities of $\mathcal{U}_\lambda, ~\forall\lambda\in\{1,\dots,\Lambda\}$, sorted in descending order. For example, $\mathcal{L}_{1}=6$ states that the highest number of users served by a single cache, is $6$.}. Naturally, $\sum_{\lambda=1}^\Lambda \mathcal{L}_{\lambda} = K$. Each profile $\Lc$ defines a \emph{class} $\mathcal{U}_{\Lc}$ comprising all the user-to-cache associations $\mathcal{U}$ that share the same\footnote{An example of a user-to-cache assignment could have that users $\mathcal{U}_1=(14,15)$ are assigned to helper node $1$, users $\mathcal{U}_2=(1,2,3,4,5,6,7,8)$ are assigned to helper node $2$, and users $\mathcal{U}_3=(9,10,11,12,13)$ to helper node $3$. This corresponds to a profile $\Lc=(8,5,2)$. The assignment $\mathcal{U}_1=(1,3,5,7,9,11,13,15)$, $\mathcal{U}_2=(2,4)$, $\mathcal{U}_3=(6,8,10,12,14)$ would have the same profile, and the two resulting $\mathcal{U}$ would belong to the same class labeled by $\Lc=(8,5,2)$.} profile $\Lc$.

As in \cite{MN14}, the measure of interest $T$ is the number of time slots, per file served per user, needed to complete delivery of any file-request vector\footnote{The time scale is normalized such that one time slot corresponds to the optimal amount of time needed to send a single file from the transmitter to the receiver, had there been no caching and no interference.} $\dv$. We use $T(\mathcal{U},\dv,\chi)$ to define the delay required by some generic caching-and-delivery scheme $\chi$ to satisfy demand $\dv$ in the presence of a user-to-cache association described by $\mathcal{U}$. To capture the effect of the user-to-cache association, we will characterize the optimal worst-case delivery time
\begin{equation}
T^*(\Lc)\defeq \min_{\chi} \max_{(\mathcal{U},\dv) \in (\mathcal{U}_{\Lc},\{1,\dots,N\}^K)} T(\mathcal{U},\dv,\chi)  \label{eq:T*_def}
\end{equation}
for each class. Our interest is in the regime of $N\geq K$ where there are more files than users.



\section{Main Results \label{sec:results}}

We first describe the main results for the single antenna case\footnote{This is also presented in the preliminary version of this work in \cite{PEU_arxiv_single}.} (shared-link BC), and then generalize to the multi-antenna/multi-sender case.

\subsection{Shared-Link Coded Caching with Shared Caches\label{subsec:single_antenna_lower}}

The following theorem presents the main result for the shared-link case ($N_0 = 1$).
\begin{theorem}\label{thm:PerClassSingleAntenna}
In the $K$-user shared-link broadcast channel with $\Lambda$ shared caches of normalized size $\gamma$, the optimal delivery time within any class/profile $\Lc$ is
\begin{equation}\label{eq:TS_L}
T^*(\Lc)=Conv\bigg(\frac{\sum_{r=1}^{\Lambda-\Lambda\gamma}\mathcal{L}_r{\Lambda-r\choose \Lambda\gamma}}{{\Lambda\choose \Lambda\gamma}}\bigg)
\end{equation}
at points $\gamma\in \{\frac{1}{\Lambda},\frac{2}{\Lambda},\dots,1\}$.
\end{theorem}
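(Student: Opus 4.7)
The plan is to establish Theorem~\ref{thm:PerClassSingleAntenna} by deriving matching achievability and converse bounds at the $\Lambda+1$ integer values $\Lambda\gamma\in\{0,1,\dots,\Lambda\}$; standard memory-sharing then yields $T^*(\Lc)$ for the remaining $\gamma$ through the lower convex envelope $Conv(\cdot)$.

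For the achievability I would adapt Maddah-Ali--Niesen placement to the cache side rather than the user side: split each file $W^n$ into $\binom{\Lambda}{\Lambda\gamma}$ equal subfiles $W^n_\tau$ indexed by $\tau\subseteq[\Lambda]$ with $|\tau|=\Lambda\gamma$, and have cache $\lambda$ store all $W^n_\tau$ such that $\lambda\in\tau$. This placement respects the cache budget and is oblivious to $\Uc$. For delivery, sort the caches so that cache $r$ is the $r$-th most populated and index its users arbitrarily as $u_{r,1},\dots,u_{r,\mathcal{L}_r}$. Process requests in phases $r=1,\dots,\Lambda-\Lambda\gamma$: phase $r$ consists of $\mathcal{L}_r$ rounds, and in round $j$ of phase $r$ the scheme transmits, for every subset $\mathcal{S}\subseteq\{r+1,\dots,\Lambda\}$ with $|\mathcal{S}|=\Lambda\gamma$, the XOR
\[
\bigoplus_{\lambda\in\{r\}\cup\mathcal{S}}W^{d_{u_{\lambda,j}}}_{(\{r\}\cup\mathcal{S})\setminus\{\lambda\}},
\]
where any missing slot (when $j>\mathcal{L}_\lambda$ for some $\lambda\in\mathcal{S}$) is treated as the zero subfile. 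Each XOR delivers, to every still-present participant, exactly the subfile it requires, and a Pascal/hockey-stick identity shows that by the end of phase $r$ the users of cache $r$ have collected all $\binom{\Lambda-1}{\Lambda\gamma}$ subfiles they need. The number of transmissions in phase $r$ is exactly $\mathcal{L}_r\binom{\Lambda-r}{\Lambda\gamma}$, so dividing the total by the subpacketization $\binom{\Lambda}{\Lambda\gamma}$ produces the right-hand side of \eqref{eq:TS_L}. The subtle point here is to verify that the ``collapsed'' XORs arising when $j$ exceeds some $\mathcal{L}_\lambda$ are still sufficient to close the bookkeeping, namely that every required subfile reaches its intended user across the union of all phases (including users of caches with index $r>\Lambda-\Lambda\gamma$, which never get their own phase but are covered as side-participants in earlier ones).

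For the converse I would invoke the index-coding framework of~\cite{WanTP15,YuMA16}: for any uncoded placement and any demand vector with all-distinct requested files, the delivery phase reduces to an index-coding instance whose optimal rate is lower bounded by the size of any acyclic induced subgraph of the side-information graph. I would then construct such an acyclic set by fixing an arbitrary permutation $\pi\in S_\Lambda$ of the caches and, for every user $k$ of cache $\pi(r)$, including precisely those subfile-requests $W^{d_k}_\tau$ whose caching pattern $\tau$ satisfies $\tau\subseteq\{\pi(r+1),\dots,\pi(\Lambda)\}$. Acyclicity holds because an arc $a\to b$ — meaning that the user owning $a$ has the subfile underlying $b$ cached — would force the cache index of $a$'s user to lie in $\tau_b$, placing $a$ strictly later than $b$ in the $\pi$-ordering; no forward arc can survive. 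Counting the selected subfiles, combined with the file-index averaging argument of~\cite{YuMA16} which eliminates the dependence of the bound on the particular placement, yields
\[
T\cdot\binom{\Lambda}{\Lambda\gamma}\ \geq\ \sum_{r=1}^{\Lambda-\Lambda\gamma}|\Uc_{\pi(r)}|\binom{\Lambda-r}{\Lambda\gamma}
\]
for every permutation $\pi$ and every $\Uc\in\Uc_\Lc$.

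The main obstacle, and hardest step, is to then optimize this permutation bound so that it matches the achievability exactly. Since the coefficients $\binom{\Lambda-r}{\Lambda\gamma}$ are strictly decreasing in $r$ and the profile entries $\mathcal{L}_1\ge\mathcal{L}_2\ge\dots\ge\mathcal{L}_\Lambda$ are decreasing by construction, a rearrangement-inequality argument shows that the permutation maximizing the bound is the one that couples $\binom{\Lambda-r}{\Lambda\gamma}$ with $\mathcal{L}_r$. Plugging this worst-case permutation into the converse yields the matching lower bound $T\ge\sum_{r=1}^{\Lambda-\Lambda\gamma}\mathcal{L}_r\binom{\Lambda-r}{\Lambda\gamma}/\binom{\Lambda}{\Lambda\gamma}$ uniformly over $\Uc\in\Uc_\Lc$ and over worst-case demands, which closes the gap at the $\Lambda+1$ integer cache sizes; the lower convex envelope $Conv(\cdot)$ then transfers the equality to every $\gamma\in[0,1]$, completing the proof.
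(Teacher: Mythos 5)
Your achievability argument is sound and is essentially the paper's scheme under a different bookkeeping: grouping the $(\Lambda\gamma+1)$-subsets $\mathcal{Q}$ by their minimum element $r$ (your ``phase $r$'') makes the transmission count $\sum_{r}\mathcal{L}_r\binom{\Lambda-r}{\Lambda\gamma}$ immediate, whereas the paper counts per round and needs a binomial manipulation (its Appendix G) to reach the same expression; your check that users of caches with $r>\Lambda-\Lambda\gamma$ are fully served as side-participants goes through because any $\tau\not\ni r$ with $|\tau|=\Lambda\gamma$ has $\min(\tau)\le\Lambda-\Lambda\gamma$.

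The converse, however, has a genuine gap, and it sits exactly where the paper's main technical contribution lies. Your intermediate claim that, for \emph{every fixed} permutation $\pi$, demand-averaging yields $T\binom{\Lambda}{\Lambda\gamma}\ge\sum_r|\Uc_{\pi(r)}|\binom{\Lambda-r}{\Lambda\gamma}$ is not established and is in fact false against an adversarial uncoded placement. For a fixed $\pi$, averaging over worst-case demands gives each subfile $W^n_\tau$ a coefficient proportional to $\sum_{r\le r^*(\tau,\pi)}|\Uc_{\pi(r)}|$, where $r^*(\tau,\pi)$ is the number of caches preceding all elements of $\tau$ in the $\pi$-order; this depends on \emph{which} caches are in $\tau$, not only on $|\tau|$ (e.g., $W^n_{\{\pi(1)\}}$ never enters your acyclic set). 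A placement concentrating mass on patterns with small $r^*$ defeats the bound, so you cannot pass to the clean form in $x_i=\sum_{n,|\tau|=i}|W^n_\tau|$. The two obvious repairs both fail or are incomplete: averaging over all $\pi$ restores symmetry but, as the paper notes in its Remark~3, dilutes the profile and collapses the bound to the uniform value $K(1-\gamma)/(1+\Lambda\gamma)$; and taking $\max_\pi$ only helps if you can identify the maximizing $\pi$ for the unknown placement, which the rearrangement inequality does not do --- it compares node \emph{counts}, which only translate to bound values once subfile sizes have already been symmetrized. The paper's resolution is to enlarge the averaging ensemble to $\mathcal{D}_{\Lc}$, i.e.\ to average jointly over worst-case demands \emph{and} over all $\Lambda!$ assignments of the profile's blocks to the physical caches, while for each member of the ensemble selecting the single population-sorted permutation $\sigma_s$ (its Lemma~2); only then does the appearance count $Q_i(W^n_\tau)$ of equation~\eqref{eq:Qi} depend solely on $i=|\tau|$, which is what licenses the $x_i$-form and the subsequent Jensen step. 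Your proposal applies the rearrangement optimization \emph{after} an invalid symmetrization rather than building the sorted permutation into the ensemble \emph{before} averaging, so the converse as written does not close.
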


\vspace{3pt}\emph{Proof.} The achievability part of the proof is given in Section~\ref{sec:scheme}, and the converse is proved in Section~\ref{sec:converse} after setting $N_0 = 1$.\vspace{3pt}

\begin{remark}
We note that the converse that supports Theorem~\ref{thm:PerClassSingleAntenna}, encompasses the class of all caching-and-delivery schemes $\chi$ that employ uncoded cache placement under a general sum cache constraint
$\frac{1}{\Lambda}\sum_{\lambda=1}^\Lambda |\mathcal{Z}_\lambda | = M$
which does not \emph{necessarily} impose an individual cache size constraint. The converse also encompasses all scenarios that involve a library of size $\sum_{n\in[N]}|W^{n}| = N$ but where the files may be of different size. In the end, even though the designed optimal scheme will consider an individual cache size $M$ and equal file sizes, the converse guarantees that there cannot exist a scheme (even in settings with uneven cache sizes or uneven file sizes) that exceeds the optimal performance identified here.
\end{remark}

From Theorem~\ref{thm:PerClassSingleAntenna}, we see that in the uniform case\footnote{Here, this uniform case, naturally implies that $\Lambda|K$.} where $\Lc=(\frac{K}{\Lambda},\frac{K}{\Lambda},\dots,\frac{K}{\Lambda})$, the expression in~\eqref{eq:TS_L} reduces to
\[T^*(\Lc)=\frac{K(1-\gamma)}{\Lambda\gamma+1}\]
matching the achievable delay presented in \cite{MND13}. It also matches the recent result by \cite{WeUlukus17} which proved that this performance --- in the context of the multiple file request problem --- is optimal under the assumption of uncoded cache placement.

The following corollary relates to this uniform case.
\begin{corollary}\label{cor:ressym}
In the uniform user-to-cache association case where $\Lc=(\frac{K}{\Lambda},\frac{K}{\Lambda},\dots,\frac{K}{\Lambda})$, the aforementioned optimal delay $T^*(\Lc)=\frac{K(1-\gamma)}{\Lambda\gamma+1}$ is smaller than the corresponding delay $T^*(\Lc)$ for any other non-uniform class.
\end{corollary}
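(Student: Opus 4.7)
The plan is to establish the inequality first at the integer cache points $\gamma\in\{1/\Lambda,2/\Lambda,\dots,1\}$, where Theorem~\ref{thm:PerClassSingleAntenna} gives the closed form before the convex envelope is taken, and then lift the bound to every $\gamma\in[0,1]$ by exploiting convexity of the uniform benchmark. Fix such a cache point and define
$\phi(\Lc)\defeq \sum_{r=1}^{\Lambda-\Lambda\gamma}\mathcal{L}_r\binom{\Lambda-r}{\Lambda\gamma}$, together with the weights $c_r\defeq \binom{\Lambda-r}{\Lambda\gamma}$ for $r\le \Lambda-\Lambda\gamma$ and $c_r\defeq 0$ otherwise. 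The sequence $(c_r)_{r=1}^{\Lambda}$ is strictly decreasing on its support, while $\Lc$ is by definition sorted in descending order, so the largest profile entries multiply the largest weights, which is the canonical configuration for a rearrangement / Schur-type argument.

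The core step I would carry out is to show that $\phi$ is Schur-convex on the sorted simplex $\{\Lc:\mathcal{L}_1\ge \cdots \ge \mathcal{L}_\Lambda\ge 0,\ \sum_\lambda \mathcal{L}_\lambda=K\}$. I would do this by a single pairwise-transfer computation: for any non-uniform sorted $\Lc$ choose indices $i<j$ with $\mathcal{L}_i>\mathcal{L}_j$ and move $\epsilon>0$ of mass from coordinate $i$ to coordinate $j$, with $\epsilon$ small enough that descending order is preserved. This brings $\Lc$ strictly closer to uniform in the majorization order and changes $\phi$ by exactly $\epsilon(c_j-c_i)\le 0$. Iterating such transfers (or equivalently invoking Hardy--Littlewood--P\'olya, since the uniform vector $(K/\Lambda,\dots,K/\Lambda)$ is majorized by every $\Lc$ with sum $K$) yields $\phi(\Lc)\ge \phi(K/\Lambda,\dots,K/\Lambda)$. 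Evaluating the right-hand side via the hockey-stick identity $\sum_{r=1}^{\Lambda-\Lambda\gamma}\binom{\Lambda-r}{\Lambda\gamma}=\binom{\Lambda}{\Lambda\gamma+1}$ and dividing by $\binom{\Lambda}{\Lambda\gamma}$ reproduces $\frac{K(1-\gamma)}{\Lambda\gamma+1}$, so the corollary holds at every cache point, with strict inequality whenever $\Lc$ is non-uniform and $1\le \Lambda\gamma\le \Lambda-1$.

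To pass from integer cache points to all $\gamma\in[0,1]$, I would observe that $g(\gamma)\defeq \frac{K(1-\gamma)}{\Lambda\gamma+1}$ is strictly convex on $[0,1]$, since $g''(\gamma)=\frac{2K\Lambda(\Lambda+1)}{(\Lambda\gamma+1)^3}>0$. Hence $g$ is a convex function lying pointwise below the cache-point values $\{T^*(\Lc)|_{\gamma=i/\Lambda}\}$ of any non-uniform profile, and the defining property of $\mathrm{Conv}(\cdot)$ as the \emph{largest} convex minorant of those data immediately gives $g(\gamma)\le T^*(\Lc)$ for every $\gamma\in[0,1]$.

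The proof has no genuinely hard step; the substantive content is the single pair-swap computation establishing Schur-convexity of $\phi$, for which the only thing to verify carefully is that $(c_r)$ is non-increasing (so every transfer is monotone) and that the transfer can always be scaled to preserve the descending sort. The convex-envelope lift is then automatic from convexity of $g$, and the evaluation of $\phi$ at the uniform profile is the standard hockey-stick identity.
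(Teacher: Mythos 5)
Your proposal is correct and matches the paper's argument in substance: the paper's one-line proof rests on exactly the fact you exploit, namely that both $\mathcal{L}_r$ and $\binom{\Lambda-r}{\Lambda\gamma}$ are non-increasing in $r$, which you merely formalize via pairwise transfers / Hardy--Littlewood--P\'olya before evaluating the uniform profile with the hockey-stick identity (the same Pascal's-triangle step the paper performs in Section~V-A). The only additions are your strictness remark and the convex-envelope lift to non-integer $\gamma$, which the paper omits because the theorem is stated only at $\gamma\in\{1/\Lambda,\dots,1\}$; both are fine and neither changes the nature of the argument.
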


\begin{proof}
The proof that the uniform profile results in the smallest delay among all profiles, follows directly from the fact that in~\eqref{eq:TS_L}, both $\mathcal{L}_r$ and ${\Lambda-r\choose \Lambda\gamma}$ are non-increasing with $r$.\end{proof}


\subsection{Multi-antenna/Multi-sender Coded Caching with Shared Caches}
The following extends Theorem~\ref{thm:PerClassSingleAntenna} to the case where the transmitter is equipped with multiple ($N_0>1$) antennas. The results hold for any $\Lc$ as long as any non zero $\mathcal{L}_\lambda$ satisfies $\mathcal{L}_\lambda\geq N_0, ~\forall\lambda\in[\Lambda]$.
\begin{theorem}\label{thm:resmultiant}
In the $N_0$-antenna $K$-user broadcast channel with $\Lambda$ shared caches of normalized size $\gamma$, the optimal delivery time within any class/profile $\Lc$ is
\begin{equation}\label{eq:multi_delay}
T^*(\Lc,N_0)=\frac{1}{N_0}Conv\bigg(\frac{\sum_{r=1}^{\Lambda-\Lambda\gamma}\mathcal{L}_{r}{\Lambda-r\choose \Lambda\gamma}}{{\Lambda\choose \Lambda\gamma}}\bigg)\\
\end{equation} for $\gamma\in \left\{\frac{1}{\Lambda},\frac{2}{\Lambda},\dots,1\right\}$. This reveals a multiplicative gain of $N_0$ with respect to the single antenna case.
\end{theorem}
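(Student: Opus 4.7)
My plan is to prove Theorem~\ref{thm:resmultiant} in two parts matching the structure used for Theorem~\ref{thm:PerClassSingleAntenna}: an achievability scheme that reaches $\frac{1}{N_0}$ times the single-antenna delay, and an information-theoretic converse that matches it. Both parts are obtained by augmenting the single-antenna constructions with an additional spatial-multiplexing argument; the working hypothesis $\mathcal{L}_\lambda\geq N_0$ is precisely what is needed to preserve the multicast opportunities of Section~\ref{sec:scheme} while using the $N_0$ transmit antennas to serve $N_0$ users per cache in parallel.

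For achievability, the plan is to lift every transmission of the single-antenna scheme of Section~\ref{sec:scheme} to an $N_0$-stream precoded transmission. Recall that the single-antenna scheme, for a subset $\tau$ of $\Lambda\gamma+1$ caches, sends a multicast XOR that serves one designated ``lucky'' user per cache in $\tau$. Since each cache contains at least $N_0$ users, I would pick $N_0$ disjoint such ``lucky'' user assignments $u_\lambda^{(1)},\dots ,u_\lambda^{(N_0)}$ per cache $\lambda\in\tau$, form the corresponding $N_0$ XOR codewords $\xi_1,\dots,\xi_{N_0}$, and transmit $\xv=\sum_{i=1}^{N_0}\vv_i\,\xi_i$. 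Each user $u_\lambda^{(i)}$, after subtracting the subfiles stored in its cache (which, by the MN-style placement at the cache level, cancels all terms of stream $j$ except exactly one subfile indexed by $\tau\setminus\{\lambda\}$), is left with its desired subfile plus $N_0-1$ residual ``co-cache'' interferers. I would then choose $\{\vv_i\}$ by zero-forcing so as to null these residuals at every co-cache user simultaneously; the degrees of freedom check is that each receiver faces $N_0-1$ constraints and the precoder has $N_0$ dimensions, which is feasible under the standard high-SNR MISO assumptions stated in Section~\ref{sec:systemModel}. Cycling through the same subsets $\tau$ as in the single-antenna scheme but completing each in $1/N_0$ of the time yields exactly \eqref{eq:multi_delay}.

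For the converse, I would extend the index-coding bound of Section~\ref{sec:converse} to the $N_0$-antenna MISO BC. The uncoded-placement converse reduces the delivery problem to an index coding instance whose side-information graph depends only on the cache contents and on $\mathcal{U}$, and hence is unchanged when $N_0$ grows. What does change is that at high SNR each channel use carries at most $N_0\log\mathrm{SNR}+o(\log\mathrm{SNR})$ bits, so the acyclic-subgraph/polymatroidal argument used to establish Theorem~\ref{thm:PerClassSingleAntenna} produces an aggregate bit count that must be divided by $N_0$ when translated into a delay. Concretely, any $N_0$-antenna scheme of delay $T$ yields a single-antenna scheme of delay $N_0 T$ on the same index coding instance, so the single-antenna lower bound $N_0 T\geq T^*(\Lc)$ directly gives the matching $T\geq T^*(\Lc)/N_0$. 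The convex envelope appears in exactly the same way as in the single-antenna proof, via memory-sharing between integer values of $\Lambda\gamma$.

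The main obstacle I anticipate is on the achievability side: proving that the zero-forcing step is always feasible for arbitrary (possibly very skewed) profiles $\Lc$. The subset-selection logic of the Section~\ref{sec:scheme} scheme is tailored to weights $\mathcal{L}_r\binom{\Lambda-r}{\Lambda\gamma}$, and one must verify that for every chosen $\tau$ of active caches there really are $N_0$ fresh lucky users available in each participating cache throughout the entire delivery phase, not just in a single slot -- this is where the hypothesis $\mathcal{L}_\lambda\geq N_0$ for all nonzero $\mathcal{L}_\lambda$ is crucial and where a careful combinatorial scheduling (a round-robin over the $\mathcal{L}_\lambda$ users of each cache, aligned with the order in which $\tau$'s are visited) is needed. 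Once the scheduling is in place, the precoder existence and the converse reduction are both comparatively standard.
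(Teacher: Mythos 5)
Your converse sketch is essentially the paper's: the index-coding/acyclic-subgraph machinery of Theorem~\ref{thm:PerClassSingleAntenna} is reused verbatim and only the cut-set bound changes, with the paper invoking a multi-sender index-coding outer bound (Lemma~\ref{cor_dof}, from \cite{li2017cooperative}) to get the factor $\frac{1}{N_0}$ rather than your ``emulate the $N_0$-antenna scheme over a single antenna in $N_0T$ time'' reduction; the latter is morally equivalent at the DoF level but would need to be made rigorous. The real problem is on the achievability side, where your construction has a genuine gap: the zero-forcing step is infeasible. In your scheme the residual seen by user $u_\lambda^{(i)}$ from stream $j\neq i$ is the subfile $W^{d_{u_\lambda^{(j)}}}_{\tau\setminus\{\lambda\}}$ intended for its co-cache mate, which is \emph{not} in cache $\lambda$ (its index excludes $\lambda$), so it must indeed be nulled by the beamformer $\vv_j$. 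But the constraint count must be done per precoding vector, not per receiver: $\vv_j$ must satisfy $\hv_{u_\lambda^{(i)}}^{T}\vv_j=0$ for \emph{every} $\lambda\in\tau$ and every $i\neq j$, i.e.\ $(\Lambda\gamma+1)(N_0-1)$ generic linear constraints on a vector in $\mathbb{C}^{N_0}$. Already for $N_0=2$ and $\Lambda\gamma=1$ this forces $\vv_j=0$. Your check ``each receiver faces $N_0-1$ constraints and the precoder has $N_0$ dimensions'' conflates the two counts.

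The paper's scheme avoids this by swapping the roles of precoding and caching relative to your design. Rather than $N_0$ parallel MN-style XORs each carrying one user per cache and each assigned a scalar beamformer, it forms, for each cache $\lambda\in\mathcal{Q}$, a single $N_0$-dimensional vector of mini-files all destined to the $N_0$-tuple $\boldsymbol{s_{\lambda,j}}$ of users of that \emph{same} cache and all indexed by the \emph{same} set $\mathcal{Q}\setminus\{\lambda\}$; it precodes this vector with the full inverse channel matrix $\mathbf{H}^{-1}_{\boldsymbol{s_{\lambda,j}}}$ and sums over $\lambda\in\mathcal{Q}$ as in \eqref{eq:TransmitSignalGeneral}. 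Co-cache separation is then exact ($N_0$ users, $N_0$ antennas, a square ZF matrix), while the cross-cache interference at a user of cache $\lambda$ consists only of subfiles indexed by sets $\mathcal{Q}\setminus\{\lambda'\}\ni\lambda$, hence removable from the cache --- no over-constrained nulling anywhere. Two further bookkeeping devices you would also need are present in the paper: each subfile is split into $N_0$ mini-files and each user appears exactly $N_0$ times across the $\mathcal{L}_1$ rounds (via the concatenation $\boldsymbol{s_\lambda}=(\mathcal{U}_\lambda\Vert\mathcal{U}_\lambda)_{N_0}$), which is what makes the delay come out to exactly $\frac{1}{N_0}$ times \eqref{eq:TS_L} after the telescoping computation of Appendix~\ref{sec:BinomialChangeProof}. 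Your round-robin scheduling intuition points in the right direction but does not repair the precoder infeasibility, which is the core missing idea.
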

\vspace{3pt}
\emph{Proof.} The scheme that achieves (\ref{eq:multi_delay}) is presented in Section~\ref{sec:scheme}, and the converse is presented in Section~\ref{sec:converse}.
\vspace{3pt}

The following extends Corollary~\ref{cor:ressym} to the multi-antenna case, and the proof is direct from Theorem~\ref{thm:resmultiant}.
\begin{corollary}\label{cor:ressymMulti}
In the uniform user-to-cache association case of $\Lc=\left(\frac{K}{\Lambda},\frac{K}{\Lambda},\dots,\frac{K}{\Lambda}\right)$ where $N_0\leq \frac{K}{\Lambda}$, the optimal delay is
\begin{equation} \label{delay_unif_multi}
T^*(\Lc)=\frac{K(1-\gamma)}{N_0(\Lambda\gamma+1)}
\end{equation}
and it is smaller than the corresponding delay $T^*(\Lc)$ for any other non-uniform class.
\end{corollary}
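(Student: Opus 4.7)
The plan is to obtain this corollary as a direct specialization of Theorem~\ref{thm:resmultiant}, in perfect parallel with how Corollary~\ref{cor:ressym} was derived from Theorem~\ref{thm:PerClassSingleAntenna}. Two things need to be verified: first, that substituting the uniform profile $\mathcal{L}_r=K/\Lambda$ into~\eqref{eq:multi_delay} collapses to~\eqref{delay_unif_multi}; second, that among all admissible profiles $\Lc$ (those with $\sum_r \mathcal{L}_r = K$, entries sorted in descending order, and every non-zero entry at least $N_0$), the uniform profile yields the smallest delay.

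For the first step I would plug $\mathcal{L}_r = K/\Lambda$ into the right-hand side of~\eqref{eq:multi_delay}, pull the constant out of the sum, and then apply the hockey-stick identity
\[
\sum_{r=1}^{\Lambda-\Lambda\gamma}\binom{\Lambda-r}{\Lambda\gamma}=\sum_{j=\Lambda\gamma}^{\Lambda-1}\binom{j}{\Lambda\gamma}=\binom{\Lambda}{\Lambda\gamma+1}.
\]
The resulting ratio $\binom{\Lambda}{\Lambda\gamma+1}/\binom{\Lambda}{\Lambda\gamma}=(\Lambda-\Lambda\gamma)/(\Lambda\gamma+1)$, multiplied by $K/(N_0\Lambda)$, gives exactly $K(1-\gamma)/(N_0(\Lambda\gamma+1))$. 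A quick second-derivative check confirms that this expression is convex in $\gamma$ on the grid $\{1/\Lambda,\dots,1\}$, so the lower convex envelope operator $Conv(\cdot)$ acts trivially and~\eqref{delay_unif_multi} follows on the full range.

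For the second step I would mimic the argument used for Corollary~\ref{cor:ressym}. The coefficient sequence $\binom{\Lambda-r}{\Lambda\gamma}$ is non-increasing in $r$ (and zero for $r>\Lambda-\Lambda\gamma$), and by definition any profile sequence $\mathcal{L}_r$ is also sorted in descending order with fixed total $K$. Chebyshev's sum inequality applied to these two similarly-ordered sequences of length $\Lambda$ then yields
\[
\sum_{r=1}^{\Lambda}\mathcal{L}_r\binom{\Lambda-r}{\Lambda\gamma}\;\geq\;\frac{1}{\Lambda}\Bigl(\sum_{r=1}^{\Lambda}\mathcal{L}_r\Bigr)\Bigl(\sum_{r=1}^{\Lambda}\binom{\Lambda-r}{\Lambda\gamma}\Bigr)=\frac{K}{\Lambda}\binom{\Lambda}{\Lambda\gamma+1},
\]
which is precisely the value attained by the uniform profile, with equality if and only if $\mathcal{L}_r=K/\Lambda$ for all $r$. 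The factor $1/N_0$ is profile-independent and so preserves the inequality; monotonicity of $Conv(\cdot)$ in its argument sequence then extends the pointwise comparison at corner points to every $\gamma$ in the convexified region.

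The only mild subtlety, rather than a genuine obstacle, is this last point about the convex-envelope operator not inverting the ordering between profiles; this is immediate because $Conv(\cdot)$, being the lower convex envelope of a collection of points, is monotone non-decreasing in each of its input values. The hypothesis $N_0 \leq K/\Lambda$ in the statement is used only to guarantee that the uniform profile itself satisfies the standing constraint $\mathcal{L}_\lambda \geq N_0$ of Theorem~\ref{thm:resmultiant}, and plays no role in the comparison against non-uniform profiles.
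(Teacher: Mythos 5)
Your proposal is correct and follows essentially the same route as the paper: the first step is the paper's own computation in Section~\ref{sec:ConverseUniform} (pulling out $K/\Lambda$ and collapsing the sum via the hockey-stick/Pascal identity to get $\binom{\Lambda}{\Lambda\gamma+1}/\binom{\Lambda}{\Lambda\gamma}$), and the second step is the paper's Corollary~\ref{cor:ressym} argument, which rests on exactly the observation that $\mathcal{L}_r$ and $\binom{\Lambda-r}{\Lambda\gamma}$ are both non-increasing in $r$ --- you merely make this explicit by invoking Chebyshev's sum inequality. Your added remarks on the monotonicity of $Conv(\cdot)$ and on the role of the hypothesis $N_0\leq K/\Lambda$ are correct refinements of details the paper leaves implicit.
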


\begin{remark}[Shared-link coded caching with multiple file requests]\label{rem:multipleFilerequstsResult}
In the error-free shared-link case ($N_0 = 1$), with file-independence and worst-case demand assumptions, the shared-cache problem here is closely related to the coded caching problem with multiple file requests per user, where now $\Lambda$ users with their own cache, request in total $K\geq \Lambda$ files. In particular, changing a bit the format, now each demand vector $\dv = (d_1,d_2,\dots,d_K)$ would represent the vector of the indices of the $K$ requested files, and each user $\lambda = \{1,2,\dots,\Lambda\}$, would request those files from this vector $\dv$, whose indices\footnote{For example, having $\mathcal{U}_2 = \{3,5,7\}$, means that user 2 has requested files $W^{d_{3}},W^{d_{5}},W^{d_{7}}$.} form the set $\mathcal{U}_\lambda \subset [K]$.
At this point, as before, the problem is now defined by the user-to-file association $\mathcal{U} = \{\mathcal{U}_1,\mathcal{U}_2,\dots ,\mathcal{U}_\Lambda\}$ which describes --- given a fixed demand vector $\dv$ --- the files requested by any user. From this point on, the equivalence with the original shared cache problem is complete. As before, each such $\mathcal{U}$ again has a corresponding (sorted) profile $\Lc=(\mathcal{L}_{1},\mathcal{L}_{2},\dots,\mathcal{L}_{\Lambda})$, and belongs to a class $\mathcal{U}_{\Lc}$ with all other associations $\mathcal{U}$ that share the same profile $\Lc$. As we quickly show in the Appendix Section~\ref{sec:AppendixMultipleFileRequests}, our scheme and converse can be adapted to the multiple file request problem, and thus directly from Theorem~\ref{cor:ressym} we conclude that for this multiple file request problem, the optimal delay $T^*(\Lc)\defeq \min_{\chi} \max_{(\mathcal{U},\dv) \in (\mathcal{U}_{\Lc},\{1,\dots,N\}^K)} T(\mathcal{U},\dv,\chi)$ corresponding to any user-to-file association profile $\Lc$, takes the form $T^*(\Lc)= Conv\bigg(\frac{\sum_{r=1}^{\Lambda-\Lambda\gamma}\mathcal{L}_{r}{\Lambda-r\choose \Lambda\gamma}}{{\Lambda\choose \Lambda\gamma}}\bigg)$.  At this point we close the parenthesis regarding multiple file requests, and we refocus exclusively on the problem of shared caches.
\end{remark}

\subsection{Interpretation of Results}
\subsubsection{Capturing the effect of the user-to-cache association profile}
In a nutshell, Theorems~\ref{thm:PerClassSingleAntenna},\ref{thm:resmultiant} quantify how profile non-uniformities bring about increased delays. What we see is that, the more skewed the profile is, the larger is the delay. This is reflected in Figure~\ref{fig:performance} which shows --- for a setting with $K=30$ users and $\Lambda=6$ caches --- the memory-delay trade-off curves for different user-to-cache association profiles. As expected, Figure~\ref{fig:performance} demonstrates that when all users are connected to the same helper cache, the only gain arising from caching is the well known \textit{local caching gain}. On the other hand, when users are assigned uniformly among the caches (i.e., when $\mathcal{L}_{\lambda}=\frac{K}{\Lambda},\forall\lambda\in[\Lambda]$) the caching gain is maximized and the delay is minimized.

\begin{figure}[t!]
\centering
\includegraphics[width=0.95\linewidth]{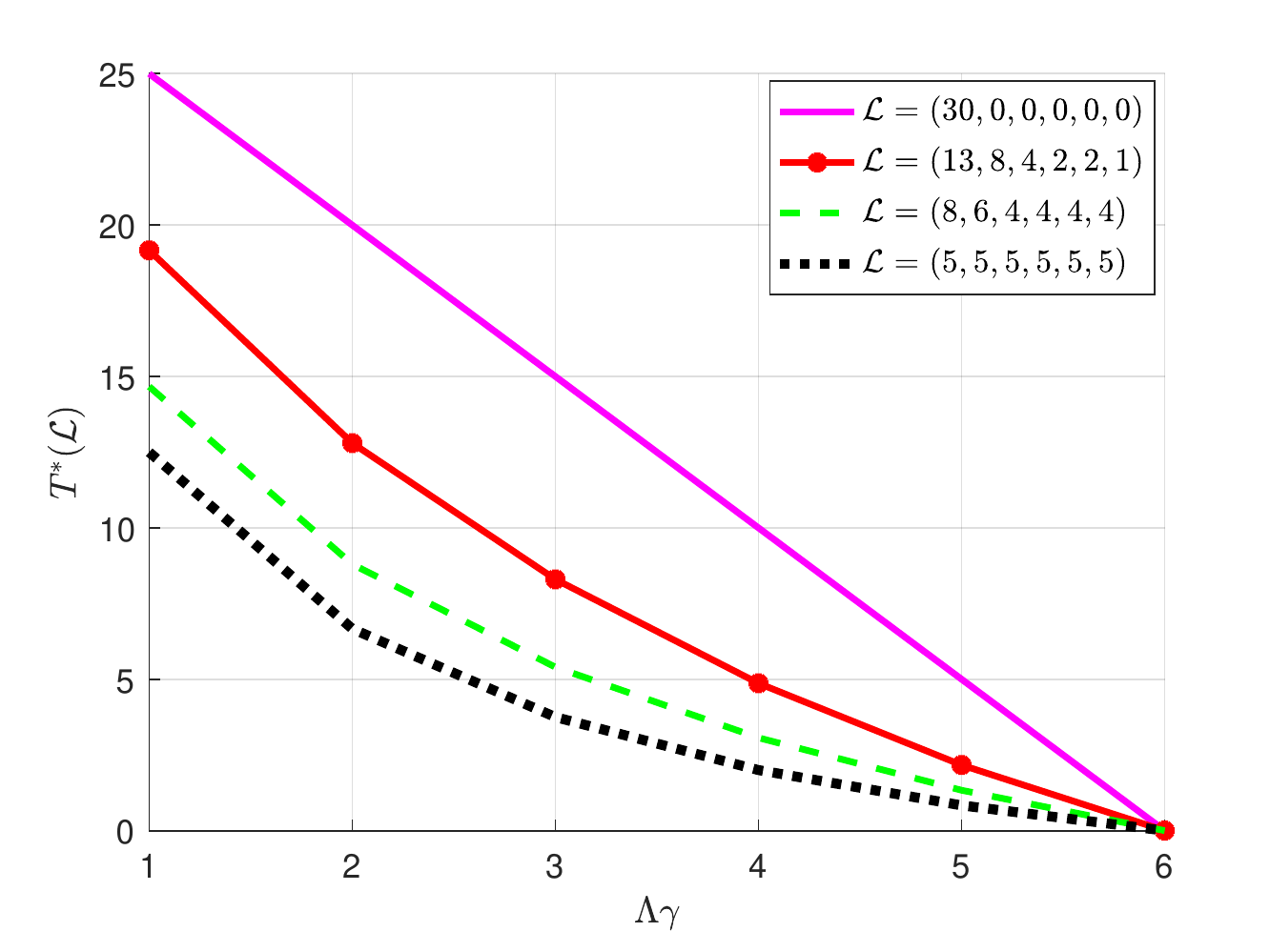}
  \caption{Optimal delay for different user-to-cache association profiles $\Lc$, for $K=30$ users and $\Lambda=6$ caches.}
  \label{fig:performance}
\end{figure}

\subsubsection{A multiplicative reduction in delay}

Theorem~\ref{thm:resmultiant} states that, as long as each cache is associated to at least $N_0$ users, we can achieve a delay $T(\Lc,N_0)= \frac{1}{N_0}\frac{\sum_{r=1}^{\Lambda-\Lambda\gamma}\mathcal{L}_{r}{\Lambda-r\choose \Lambda\gamma}}{{\Lambda\choose \Lambda\gamma}}$. The resulting reduction
\begin{equation}\label{eq:ratioT}
\frac{T(\Lc,N_0=1)}{T(\Lc,N_0) }= N_0
\end{equation}
as compared to the single-stream case, comes in strong contrast to the case of $\Lambda = K$ where, as we know from~\cite{ShariatpanahiMK16it}, this same reduction takes the form
\begin{equation}\label{eq:ratioTold}
\frac{T(\Lambda = K,N_0=1)}{T(\Lambda = K,N_0)} = \frac{\frac{K(1-\gamma)}{1+\Lambda\gamma}}{\frac{K(1-\gamma)}{N_0+\Lambda\gamma}} = \frac{N_0+\Lambda\gamma}{1+\Lambda\gamma}
\end{equation}
which approaches $N_0$ only when $\gamma \rightarrow 0$, and which decreases as $\gamma$ increases.

In the uniform case ($\mathcal{L}_\lambda = \frac{K}{\Lambda}$) with $\Lambda \leq \frac{K}{N_0}$, Corollary~\ref{cor:ressymMulti} implies a sum-DoF \[d_\Sigma(\gamma)= \frac{K(1-\gamma)}{T} = N_0(1+\Lambda \gamma)\] which reveals that every time we add a single degree of cache-redundancy (i.e., every time we increase $\Lambda\gamma$ by one), we gain $N_0$ degrees of freedom. This is in direct contrast to the case of $\Lambda = K$ (for which case we recall from~\cite{ShariatpanahiMK16it} that the DoF is $N_0+\Lambda\gamma$) where the same unit increase in the cache redundancy yields only one additional DoF.

\subsubsection{Impact of encoding over users that share the same cache}
As we know, both the MN algorithm in~\cite{MN14} and the multi-antenna algorithm in \cite{ShariatpanahiMK16it}, are designed for users with different caches, so --- in the uniform case where $\mathcal{L}_\lambda = K/\Lambda$ --- one conceivable treatment of the shared-cache problem would have been to apply these algorithms over $\Lambda$ users at a time, all with different caches\footnote{This would then require $\frac{K}{\Lambda}$ such rounds in order to cover all $K$ users.}. As we see, in the single antenna case, this implementation would treat $1+\Lambda\gamma$ users at a time thus yielding a delay of $T = \frac{K(1-\gamma)}{1+\Lambda\gamma}$, while in the multi-antenna case, this implementation would treat $N_0+\Lambda\gamma$ users at a time (see~\cite{ShariatpanahiMK16it}) thus yielding a delay of $T = \frac{K(1-\gamma)}{N_0+\Lambda\gamma}$. What we see here is that while this direct implementation is optimal (this is what we also do here in the uniform-profile case) in the single antenna case (see \cite{WeUlukus17}, see also Corollary~\ref{cor:ressym}), in the multi-antenna case, this same approach can have an unbounded performance gap
\begin{equation}\label{eq:gapNaive}
  \frac{\frac{K(1-\gamma)}{N_0+\Lambda\gamma}}{\frac{K(1-\gamma)}{N_0(1+\Lambda\gamma)}} = \frac{N_0(1+\Lambda\gamma)}{N_0+\Lambda\gamma}
\end{equation}
from the derived optimal performance from Corollary~\ref{cor:ressymMulti}. These conclusions also apply when the user-to-cache association profiles are not uniform; again there would be a direct implementation of existing multi-antenna coded caching algorithms, which would though again have an unbounded performance gap from the optimal performance achieved here.

\section{Coded Caching Scheme\label{sec:scheme}}

This section is dedicated to the description of the placement-and-delivery scheme achieving the performance presented in the general Theorem~\ref{thm:resmultiant} (and hence also in Theorem~\ref{thm:PerClassSingleAntenna} and the corollaries). The formal description of the optimal scheme in the upcoming subsection will be followed by a clarifying example in Section~\ref{subsec:example_scheme} that demonstrates the main idea behind the design.


\subsection{Description of the General Scheme}
The placement phase, which uses exactly the algorithm developed in \cite{MN14} for the case of $(\Lambda=K,M,N)$, is independent of $\mathcal{U},\Lc$, while the delivery phase is designed for any given $\Uc$, and will achieve the optimal worst-case delivery time stated in \eqref{eq:TS_L} and \eqref{eq:multi_delay}.
As mentioned, we will assume that any non zero $\mathcal{L}_{\lambda}$ satisfies $\mathcal{L}_{\lambda}\geq N_0, \forall \lambda\in[\Lambda]$.


\subsubsection{Cache Placement Phase \label{sec:SchemePlacement}}
The placement phase employs the original cache-placement algorithm of~\cite{MN14} corresponding to the scenario of having only $\Lambda$ users, each with their own cache. Hence --- recalling from~\cite{MN14} --- first each file $W^n$ is split into $\Lambda \choose \Lambda\gamma$ disjoint subfiles $W^n_\Tau$, for each $\Tau \subset [\Lambda]$, $|\Tau|=\Lambda\gamma$, and then each cache stores a fraction $\gamma$ of each file, as follows
\begin{equation}
\Zc_\lambda=\{W^n_\Tau :\Tau\ni\lambda,~ \forall n\in[N]\}.
\end{equation}

\subsubsection{Delivery Phase\label{sec:SchemeDelivery}}

For the purpose of the scheme description only, we will assume without loss of generality that $|\mathcal{U}_1| \geq |\mathcal{U}_2| \geq \dots \geq |\mathcal{U}_{\Lambda}|$ (any other case can be handled by simple relabeling of the caches), and we will use the notation $\mathcal{L}_\lambda \defeq |\Uc_\lambda|$. Furthermore, in a slight abuse of notation, we will consider here each $\mathcal{U}_\lambda$ to be an \emph{ordered vector} describing, in order, the users associated to cache $\lambda$.
We will also use
\begin{equation}\label{eq:Alambda}
\boldsymbol{s_{\lambda}} = (\mathcal{U}_\lambda \Vert \mathcal{U}_\lambda )_{N_0}, \lambda\in[\Lambda]
\end{equation}
to denote the $N_0$-fold concatenation of each $\mathcal{U}_\lambda$. Each such $N_0\mathcal{L}_{\lambda}$-length vector $\boldsymbol{s_{\lambda}}$ can be seen as the concatenation of $\mathcal{L}_\lambda$ different $N_0$-tuples $\boldsymbol{s_{\lambda,j}}$, $j=1,2,\dots, \mathcal{L}_\lambda$, i.e., each $\boldsymbol{s_{\lambda}}$ takes the form\footnote{Note also that having $\mathcal{L}_\lambda\geq N_0, \forall\lambda\in[\Lambda]$ guarantees that in any given $\boldsymbol{s_{\lambda,j}}, j\in[\mathcal{L}_\lambda]$, a user appears at most once.}
\begin{equation*}
\boldsymbol{s_{\lambda}} = \boldsymbol{s_{\lambda,1}} \Vert \boldsymbol{s_{\lambda,2}} \Vert \dots \Vert \underbrace{\boldsymbol{s_{\lambda,\mathcal{L}_\Lambda}}}_{N_0-\text{length}}.
\end{equation*}

The delivery phase commences with the demand vector $\dv$ being revealed to the server.
Delivery will consist of $\mathcal{L}_1$ rounds, where each round $j\in[\mathcal{L}_1]$ serves users
\begin{equation} \label{eq:UsersPerRound}
\mathcal{R}_j=\bigcup_{\lambda\in[\Lambda]} \big( \boldsymbol{s_{\lambda,j}}:\mathcal{L}_\lambda \geq j \big).
\end{equation} 

\paragraph*{Transmission scheme}
Once the demand vector $\dv$ is revealed to the transmitter, each requested subfile $W^{n}_{\Tau}$ (for any $n$ found in $\dv$) is further split into $N_0$ mini-files $\{W^{n}_{\Tau,l}\}_{l\in[N_0]}$. During round $j$, serving users in $\mathcal{R}_j$, we create $\Lambda \choose \Lambda\gamma+1$ sets $\mathcal{Q}\subseteq [\Lambda]$ of size $|\mathcal{Q}|=\Lambda\gamma+1$, and for each set $\mathcal{Q}$, we pick the set of users
\begin{equation}\label{eq:UsersServedPerXOR}
\chi_\mathcal{Q}=\bigcup_{\lambda\in \mathcal{Q}}\big( \boldsymbol{s_{\lambda,j}}:\mathcal{L}_\lambda \geq j \big).
\end{equation}
 If $\chi_\mathcal{Q} = \emptyset$, then there is no transmission, and we move to the next $\mathcal{Q}$. If $\chi_\mathcal{Q}\neq \emptyset$, the server --- during this round $j$ --- transmits the following vector\footnote{The transmitted-vector structure below draws from the structure in~\cite{LampirisEliaJsac18}, in the sense that it involves the linear combination of one or more \emph{Zero Forcing} precoded (ZF-precoded) vectors of subfiles that are labeled (as we see below) in the spirit of \cite{MN14}.}
 \begin{equation} \label{eq:TransmitSignalGeneral}
\xv_{\chi_{\mathcal{Q}}}=\!\!\!\!\sum_{\lambda\in \mathcal{Q}:\mathcal{L}_\lambda \geq j}\!\!\!\!\mathbf{H}^{-1}_{\boldsymbol{s_{\lambda,j}}}\cdot
		\begin{bmatrix}
    	 W^{d_{\boldsymbol{s_{\lambda,j}}(1)}}_{\mathcal{Q}\backslash{\{\lambda\}},l} &\dots &
   		W^{d_{\boldsymbol{s_{\lambda,j}}(N_0)}}_{\mathcal{Q}\backslash{\{\lambda\}},l}
		\end{bmatrix}^T
 \end{equation}
 where $W^{d_{\boldsymbol{s_{\lambda,j}}(k)}}_{\mathcal{Q}\backslash{\{\lambda\}},l}$ is a mini-file intended for user $\boldsymbol{s_{\lambda,j}}(k)$, i.e., for the user labelled by the $k$th entry of vector $\boldsymbol{s_{\lambda,j}}$ . The choice of $l$ is sequential, guaranteeing that no subfile $W^{d_{\boldsymbol{s_{\lambda,j}}(k)}}_{\mathcal{Q}\backslash{\{\lambda\}},l}$ is transmitted twice. Since each user appears in $\boldsymbol{s_{\lambda}}$ (and consequently in $\bigcup_{j\in[\mathcal{L}_1]} \mathcal{R}_j$) exactly $N_0$ times, at the end of the $\mathcal{L}_1$ rounds, all the $N_0$ mini-files $W^{d_{\boldsymbol{s_{\lambda,j}}(k)}}_{\mathcal{Q}\backslash{\{\lambda\}},l}$, $l\in[N_0]$ will be sent once. In the above, $\mathbf{H}^{-1}_{\boldsymbol{s_{\lambda,j}}}$ denotes the inverse of the channel matrix between the $N_0$ transmit antennas and the users in vector $\boldsymbol{s_{\lambda,j}}$.

\paragraph*{Decoding}
Directly from \eqref{eq:TransmitSignalGeneral}, we see that each receiver $\boldsymbol{s_{\lambda,j}}(k)$ obtains a received signal whose noiseless version takes the form
\[ y_{\boldsymbol{s_{\lambda,j}}(k)}=W^{d_{\boldsymbol{s_{\lambda,j}}(k)}}_{\mathcal{Q}\backslash{\{\lambda\},l}} + \iota_{\boldsymbol{s_{\lambda,j}}(k)} \]
where $\iota_{\boldsymbol{s_{\lambda,j}}(k)}$ is the $k$th entry of the interference vector
\begin{equation}
\label{eq:received48}
\sum_{\lambda'\in \mathcal{Q}\setminus{\{\lambda\}}:\mathcal{L}_{\lambda'}\geq j}\!\!\!\mathbf{H}^{-1}_{\boldsymbol{s_{\lambda',j}}}\cdot
		\begin{bmatrix}
    	 W^{d_{\boldsymbol{s_{\lambda',j}}(1)}}_{\mathcal{Q}\backslash{\{\lambda'\},l}} &\dots &
   		W^{d_{\boldsymbol{s_{\lambda',j}}(N_0)}}_{\mathcal{Q}\backslash{\{\lambda'\},l}}
		\end{bmatrix}^T .
\end{equation}
In the above, we see that the entire interference term $\iota_{\boldsymbol{s_{\lambda,j}}(k)}$ experienced by receiver $\boldsymbol{s_{\lambda,j}}(k)$, can be removed (cached-out) because all appearing subfiles $W^{d_{\boldsymbol{s_{\lambda',j}}(1)}}_{\mathcal{Q}\backslash{\{\lambda'\},l}}, \dots, W^{d_{\boldsymbol{s_{\lambda',j}}(N_0)}}_{\mathcal{Q}\backslash{\{\lambda'\},l}}$, for all $\lambda'\in \mathcal{Q}\setminus{\{\lambda\}}, \mathcal{L}_{\lambda'}\geq j$, can be found in cache $\lambda$ associated to this user, simply because $\lambda\in \mathcal{Q}\backslash\{\lambda'\}$.

This completes the proof of the scheme for the multi-antenna case.

%
%

\subsubsection{Small modification for the single antenna case\label{sec:schemeSingleAntenna}}For the single-antenna case, the only difference is that now $\boldsymbol{s_{\lambda}} = \mathcal{U}_\lambda$, and that each transmitted vector in~\eqref{eq:TransmitSignalGeneral} during round $j$, becomes a scalar of the form\footnote{A similar transmission method can be found also in the work of \cite{JinCaireGlobecom16} for the setting of decentralized coded caching with reduced subpacketization.}
 \begin{equation} \label{eq:TransmitSignalSingleAntenna}
x_{\chi_{\mathcal{Q}}}=\!\!\!\!\bigoplus_{\lambda\in \mathcal{Q}:\mathcal{L}_\lambda \geq j} W^{d_{\boldsymbol{s_{\lambda,j}}}}_{\mathcal{Q}\backslash{\{\lambda\}},1}.
 \end{equation}
 The rest of the details from the general scheme, as well as the subsequent calculation of the delay, follow directly.

\subsection{Calculation of Delay}

To first calculate the delay needed to serve the users in $\mathcal{R}_j$ during round $j$, we recall that there are $\Lambda \choose \Lambda\gamma+1$ sets $$\chi_\mathcal{Q}=\bigcup_{\lambda\in \mathcal{Q}}\big( \mathcal{U}_{\lambda}(j):\mathcal{L}_\lambda \geq j \big), \mathcal{Q}\subseteq [\Lambda]$$ of users, and we recall that $|\mathcal{U}_1| \geq |\mathcal{U}_2| \geq \dots \geq |\mathcal{U}_{\Lambda}|$. For each such non-empty set, there is a transmission. Furthermore we see that for $a_j\dfn \Lambda - \frac{|\mathcal{R}_j|}{N_0}$, there are ${a_j \choose \Lambda\gamma+1}$ such sets $\chi_\mathcal{Q}$ which are empty, which means that round $j$ consists of
\begin{equation}\label{eq:totsubround}
{\Lambda \choose \Lambda\gamma+1}-{a_j \choose \Lambda\gamma+1}
\end{equation}
transmissions.

Since each file is split into ${\Lambda\choose \Lambda\gamma}N_0$ subfiles, the duration of each such transmission is
\begin{equation}\label{eq:dureachtransmission}
\frac{1}{{\Lambda\choose \Lambda\gamma}N_0}
\end{equation}
and thus summing over all $\mathcal{L}_1$ rounds, the total delay takes the form
\begin{equation}\label{eq:totdelay1}
T=\frac{\sum_{j=1}^{\mathcal{L}_1}{{\Lambda \choose \Lambda\gamma+1}-{a_j \choose \Lambda\gamma+1}}}{{\Lambda\choose \Lambda\gamma}{N_0}}
\end{equation}
which, after some basic algebraic manipulation (see Appendix~\ref{sec:BinomialChangeProof} for the details), takes the final form
\begin{equation}\label{eq:totdelay2}
T=\frac{1}{N_0}\frac{\sum_{r=1}^{\Lambda-\Lambda\gamma}\mathcal{L}_r{\Lambda-r\choose \Lambda\gamma}}{{\Lambda\choose \Lambda\gamma}}
\end{equation}
which concludes the achievability part of the proof. \qed

\subsection{Scheme Example: $K=N=15$, $\Lambda=3$, $N_0=2$ and $\Lc=(8,5,2)$\label{subsec:example_scheme}}

Consider a scenario with $K=15$ users $\{1,2,\dots,15\}$, a server equipped with $N_0=2$ transmitting antennas that stores a library of $N=15$ equally-sized files $W^1,W^2,\dots,W^{15}$, and consider $\Lambda=3$ helper caches, each of size equal to $M=5$ units of file.

In the cache placement phase, we split each file $W^n$ into $3$ equally-sized disjoint subfiles denoted by $W^n_{1},W^n_{2},W^n_{3}$ and as in \cite{MN14}, each cache $\lambda$ stores $W^n_{\lambda}, \forall n\in [15]$.

We assume that in the subsequent cache assignment, users $\mathcal{U}_1=(1,2,3,4,5,6,7,8)$ are assigned to helper node $1$, users $\mathcal{U}_2=(9,10,11,12,13)$ to helper node $2$ and users $\mathcal{U}_3=(14,15)$ to helper node $3$. This corresponds to a profile $\Lc=(8,5,2)$.
We also assume without loss of generality that the demand vector is $\dv=(1,2,\dots,15)$.

Delivery takes place in $|\mathcal{U}_1|=8$ rounds, and each round will serve either $N_0=2$ users or no users from each of the following three ordered user groups
\begin{align*}
\boldsymbol{s_1}&=\mathcal{U}_1 || \mathcal{U}_1 = (1,2,\dots,7,8,1,2,\dots,7,8),\\
\boldsymbol{s_2}&=\mathcal{U}_2 || \mathcal{U}_2 = (9,10,11,12,13,9,10,11,12,13),\\
\boldsymbol{s_3}&=\mathcal{U}_3 || \mathcal{U}_3 = (14,15,14,15).
\end{align*}
Specifically, rounds 1 through 8, will respectively serve the following sets of users
\begin{align*}
\mathcal{R}_1&=\{1,2,9,10,14,15\}\\
\mathcal{R}_2&=\{3,4,11,12,14,15\}\\
\mathcal{R}_3&=\{5,6,13,9\}\\
\mathcal{R}_4&=\{7,8,10,11\}\\
\mathcal{R}_5&=\{1,2,12,13\}\\
\mathcal{R}_6&=\{3,4\}\\
\mathcal{R}_7&=\{5,6\}\\
\mathcal{R}_8&=\{7,8\}.
\end{align*}
Before transmission, each requested subfile $W^n_\Tau$ is further split into $N_0=2$ mini-files $W^n_{\Tau,1}$ and $W^n_{\Tau,2}$.  As noted in the general description of the scheme, the transmitted vector structure within each round, draws from \cite{LampirisEliaJsac18} as it employs the linear combination of ZF-precoded vectors. In the first round, the server transmits, one after the other, the following $3$ vectors
\begin{align}
\label{eq:ex1}
\xv_{\{1,2,9,10\}}=&\mathbf{H}^{-1}_{\{1,2\}}
\begin{bmatrix}
W^1_{2,1}\\
W^2_{2,1}
\end{bmatrix}
+
\mathbf{H}^{-1}_{\{9,10\}}
\begin{bmatrix}
W^{9}_{1,1}\\
W^{10}_{1,1}
\end{bmatrix}\\
\label{eq:ex2}
\xv_{\{1,2,14,15\}}=&\mathbf{H}^{-1}_{\{1,2\}}
\begin{bmatrix}
W^1_{3,1}\\
W^2_{3,1}
\end{bmatrix}
+
\mathbf{H}^{-1}_{\{14,15\}}
\begin{bmatrix}
W^{14}_{1,1}\\
W^{15}_{1,1}
\end{bmatrix}\\
\label{eq:ex3}
\xv_{\{9,10,14,15\}}=&\mathbf{H}^{-1}_{\{9,10\}}
\begin{bmatrix}
W^9_{3,1}\\
W^{10}_{3,1}
\end{bmatrix}
+
\mathbf{H}^{-1}_{\{14,15\}}
\begin{bmatrix}
W^{14}_{2,1}\\
W^{15}_{2,1}
\end{bmatrix}
\end{align}
where $\mathbf{H}^{-1}_{\{i,j\}}$ is the zero-forcing (ZF) precoder\footnote{Instead of ZF, one can naturally use a similar precoder with potentially better performance in different SNR ranges.} that inverts the channel $\mathbf{H}_{\{i,j\}}=[\mathbf{h}_i^T \mathbf{h}_j^T]$ from the transmitter to users $i$ and $j$.
To see how decoding takes place, let us first focus on users 1 and 2 during the transmission of $\xv_{\{1,2,9,10\}}$, where we see that, due to ZF precoding, the users' respective received signals take the form
\begin{align}
&y_1=W^1_{2,1}+\underbrace{\mathbf{h}_1^T\mathbf{H}^{-1}_{\{9,10\}}
\begin{bmatrix}
W^{9}_{1,1}\\
W^{10}_{1,1}
\end{bmatrix}}_{\text{interference}}+w_1\\
&y_2=W^2_{2,1}+\underbrace{\mathbf{h}_2^T\mathbf{H}^{-1}_{\{9,10\}}
\begin{bmatrix}
W^{9}_{1,1}\\
W^{10}_{1,1}
\end{bmatrix}}_{\text{interference}}+w_2.
\end{align}
Users 1 and 2 use their cached content in cache node 1, to remove files $W^9_{1,1},W^{10}_{1,1}$, and can thus directly decode their own desired subfiles.
The same procedure is applied to the remaining users served in the first round.

Similarly, in the second round, we have
 \begin{align}
\xv_{\{3,4,11,12\}}=&\mathbf{H}^{-1}_{\{3,4\}}
\begin{bmatrix}
W^3_{2,1}\\
W^4_{2,1}
\end{bmatrix}
+
\mathbf{H}^{-1}_{\{11,12\}}
\begin{bmatrix}
W^{11}_{1,1}\\
W^{12}_{1,1}
\end{bmatrix}\\
\xv_{\{3,4,14,15\}}=&\mathbf{H}^{-1}_{\{3,4\}}
\begin{bmatrix}
W^3_{3,1}\\
W^4_{3,1}
\end{bmatrix}
+
\mathbf{H}^{-1}_{\{14,15\}}
\begin{bmatrix}
W^{14}_{1,2}\\
W^{15}_{1,2}
\end{bmatrix}\\
\xv_{\{11,12,14,15\}}=&\mathbf{H}^{-1}_{\{11,12\}}
\begin{bmatrix}
W^{11}_{3,1}\\
W^{12}_{3,1}
\end{bmatrix}
+
\mathbf{H}^{-1}_{\{14,15\}}
\begin{bmatrix}
W^{14}_{2,2}\\
W^{15}_{2,2}
\end{bmatrix}
\end{align}
and again in each round, each pair of users can cache-out some of the files, and then decode their own file due to the ZF precoder.

The next three transmissions, corresponding to the third round, are as follows
\begin{align*}
&\xv_{\{5,6,13,9\}}=\mathbf{H}^{-1}_{\{5,6\}}
\begin{bmatrix}
W^5_{2,1}\\
W^6_{2,1}
\end{bmatrix}
+
\mathbf{H}^{-1}_{\{13,9\}}
\begin{bmatrix}
W^{13}_{1,1}\\
W^9_{1,2}
\end{bmatrix}
\\[3pt]
&\xv_{\{5,6\}}=\mathbf{H}^{-1}_{\{5,6\}}
\begin{bmatrix}
W^5_{3,1}\\
W^6_{3,1}
\end{bmatrix} \ \
\xv_{\{13,9\}}=\mathbf{H}^{-1}_{\{13,9\}}
\begin{bmatrix}
W^{13}_{3,1}\\
W^{9}_{3,2}
\end{bmatrix}
\end{align*}
where the transmitted vectors $\xv_{\{5,6\}}$ and $\xv_{\{13,9\}}$ simply use zero-forcing.
Similarly round 4 serves the users in $\mathcal{R}_4$ by sequentially sending
\begin{align}
&\xv_{\{7,8,10,11\}}=\mathbf{H}^{-1}_{\{7,8\}}
\begin{bmatrix}
W^7_{2,1}\\
W^8_{2,1}
\end{bmatrix}
+
\mathbf{H}^{-1}_{\{10,11\}}
\begin{bmatrix}
W^{10}_{1,2}\\
W^{11}_{1,2}
\end{bmatrix}
\\[3pt]
&\xv_{\{7,8\}}=\mathbf{H}^{-1}_{\{7,8\}}
\begin{bmatrix}
W^7_{3,1}\\
W^8_{3,1}
\end{bmatrix} \ \
\xv_{\{10,11\}}=\mathbf{H}^{-1}_{\{10,11\}}
\begin{bmatrix}
W^{10}_{3,2}\\
W^{11}_{3,2}
\end{bmatrix}
\end{align}
and round 5 serves the users in $\mathcal{R}_5$ by sequentially sending
\begin{align}
&\xv_{\{1,2,12,13\}}=\mathbf{H}^{-1}_{\{1,2\}}
\begin{bmatrix}
W^1_{2,2}\\
W^2_{2,2}
\end{bmatrix}
+
\mathbf{H}^{-1}_{\{12,13\}}
\begin{bmatrix}
W^{12}_{1,2}\\
W^{13}_{1,2}
\end{bmatrix}
\\[3pt]
&\xv_{\{1,2\}}=\mathbf{H}^{-1}_{\{1,2\}}
\begin{bmatrix}
W^1_{3,2}\\
W^2_{3,2}
\end{bmatrix} \ \
\xv_{\{12,13\}}=\mathbf{H}^{-1}_{\{12,13\}}
\begin{bmatrix}
W^{12}_{3,2}\\
W^{13}_{3,2}
\end{bmatrix}.
\end{align}
Finally, for the remaining rounds $6,7,8$ which respectively involve user sets $\mathcal{R}_6,\mathcal{R}_7$ and $\mathcal{R}_8$ that are connected to the same helper cache 1, data is delivered using the following standard ZF-precoded transmissions
\begin{align*}
&\xv_{\{3,4\}}=\mathbf{H}^{-1}_{\{3,4\}}
\begin{bmatrix}
W^3_{2,2}||W^3_{3,2}\\
W^4_{2,2}||W^4_{3,2}
\end{bmatrix}\\[3pt]
&\xv_{\{5,6\}}=\mathbf{H}^{-1}_{\{5,6\}}
\begin{bmatrix}
W^5_{2,2}||W^5_{3,2}\\
W^6_{2,2}||W^6_{3,2}
\end{bmatrix}\\
&\xv_{\{7,8\}}=\mathbf{H}^{-1}_{\{7,8\}}
\begin{bmatrix}
W^{7}_{2,2}||W^{7}_{3,2}\\
W^{8}_{2,2}||W^{8}_{3,2}
\end{bmatrix}.
\end{align*}

The overall delivery time required to serve all users is \[T=\frac{1}{6}\cdot 15+\frac{1}{3}\cdot 3 =\frac{21}{6}\] where the first summand is for rounds 1 through 5, and the second summand is for rounds 6 through 8.

It is very easy to see that this delay remains the same --- given again worst-case demand vectors --- for any user-to-cache association $\mathcal{U}$ with the same profile $\Lc=(8,5,2)$. Every time, this delay matches the converse
\begin{align}
T^*((8,5,2))&\geq \frac{\sum_{r=1}^{2}\mathcal{L}_r{{3-r}\choose 1}}{2{3\choose 1}}= \frac{8\cdot 2+5\cdot 1}{6}=\frac{21}{6}
\end{align} of Theorem \ref{thm:resmultiant}.



\section{Information Theoretic Converse\label{sec:converse}}

Toward proving Theorems~\ref{thm:PerClassSingleAntenna} and~\ref{thm:resmultiant}, we develop a lower bound on the normalized delivery time in (\ref{eq:T*_def}) for each given user-to-cache association profile $\Lc$.  The proof technique is based on the breakthrough in~\cite{WanTP15} which --- for the case of $\Lambda = K$, where each user has their own cache --- employed index coding to bound the performance of coded caching. Part of the challenge here will be to account for having shared caches, and mainly to adapt the index coding approach to reflect non-uniform user-to-cache association classes.

We will begin with lower bounding the normalized delivery time $T(\mathcal{U},\dv,\chi)$, for any user-to-cache association $\mathcal{U}$, demand vector $\dv$ and a generic caching-delivery strategy $\chi$.

\paragraph*{Identifying the distinct problems} 
The caching problem is defined when the user-to-cache association $\mathcal{U}=\{\mathcal{U}_\lambda \}_{\lambda=1}^\Lambda$ and demand vector $\dv$ are revealed.
What we can easily see is that there are many combinations of $\{\mathcal{U}_\lambda \}_{\lambda=1}^\Lambda$ and $\dv$ that jointly result in the same coded caching problem. After all, any permutation of the file indices requested by users assigned to the same cache, will effectively result in the same coded caching problem.
As one can see, every \emph{distinct} coded caching problem is fully defined by $\{\dvlambda\}_{\lambda=1}^\Lambda$, where $\dvlambda$ denotes the vector of file indices requested by the users in $\mathcal{U}_\lambda$, i.e., requested by the $|\mathcal{U}_\lambda|$ users associated to cache $\lambda$. The analysis is facilitated by reordering the demand vector $\dv$ to take the form
\begin{equation}
\label{eq:OrderDemand2}
\dv(\Uc)\dfn (\boldsymbol{d_1}, \dots, \boldsymbol{d_\lambda}).\end{equation}
Based on this, we define the set of worst-case demands associated to a given profile $\Lc$, to be
$$\mathcal{D}_{\Lc} = \{\dv(\mathcal{U}): \dv\in \mathcal{D}_{wc}, \mathcal{U} \in \mathcal{U}_{\Lc}
\}$$ where $\mathcal{D}_{wc}$ is the set of demand vectors $\dv$ whose $K$ entries are all different (i.e., where $d_i \neq d_j, ~i,j\in[\Lambda],~i\neq j$, corresponding to the case where all users request different files). We will convert each such coded caching problem into an index coding problem.

\paragraph*{The corresponding index coding problem}
To make the transition to the index coding problem, each requested file $W^{\dvlambda(j)}$ is split into $2^\Lambda$ disjoint subfiles $W^{\dvlambda(j)}_\Tau,\Tau\in 2^{[\Lambda]}$ where $\Tau\subset[\Lambda]$ indicates the set of helper nodes in which $W^{\dvlambda(j)}_\Tau$ is cached\footnote{Notice that by considering a subpacketization based on the power set $2^{[\Lambda]}$, and by allowing for any possible size of these subfiles, the generality of the result is preserved. Naturally, this does not impose any sub-packetization related performance issues because this is done only for the purpose of creating a converse.}. Then --- in the context of index coding --- each subfile $W^{\dvlambda(j)}_\Tau$ can be seen as being requested by a different user that has as side information all the content $\Zc_\lambda$ of the same helper node $\lambda$. Naturally, no subfile of the form $W^{\dvlambda(j)}_\Tau, \; ~\Tau \ni\lambda$ is requested, because helper node $\lambda$ already has this subfile. Therefore the corresponding index coding problem is defined by $K2^{\Lambda-1}$ requested subfiles, and it is fully represented by the side-information graph $\mathcal{G}=(\mathcal{V}_{\mathcal{G}},\mathcal{E}_{\mathcal{G}})$, where $\mathcal{V}_{\mathcal{G}}$ is the set of vertices (each vertex/node representing a different subfile $W^{\dvlambda(j)}_\Tau, \Tau\not\ni\lambda$) and $\mathcal{E}_{\mathcal{G}}$ is the set of direct edges of the graph. Following standard practice in index coding, a directed edge from node $W^{\dvlambda(j)}_\Tau$ to $W^{\boldsymbol{d_{\lambda'}}(j')}_{\Tau'}$ exists if and only if $\lambda'\in\Tau$.  For any given $\mathcal{U}$, $\dv$ (and of course, for any scheme $\chi$) the total delay $T$ required for this index coding problem, is the completion time for the corresponding coded caching problem.

\paragraph*{Lower bounding $T(\Uc,\dv,\chi)$}
We are interested in lower bounding $T(\Uc,\dv,\chi)$ which represents the total delay required to serve the users for the index coding problem corresponding to the side-information graph $\mathcal{G}_{\Uc,\dv}$ defined by $\Uc,\dv,\chi$ or equivalently by $\dv(\Uc),\chi$.

In the next lemma, we remind the reader --- in the context of our setting --- the useful index-coding converse from~\cite{li2017cooperative}.

\begin{lemma}(Cut-set-type converse \cite{li2017cooperative})\label{cor_dof}
For a given $\Uc,\dv,\chi$, in the corresponding side information graph $\mathcal{G}_{\Uc,\dv}=(\mathcal{V}_{\mathcal{G}},\mathcal{E}_{\mathcal{G}})$ of the $N_0$-antenna MISO broadcast channel with $\mathcal{V}_{\mathcal{G}}$ vertices/nodes and $\mathcal{E}_{\mathcal{G}}$ edges, the following inequality holds
\begin{equation}\label{eq:indexbound}T\geq \frac{1}{N_0}\sum_{\smallV \in \mathcal{V_{J}}}|\smallV|
\end{equation}
for every acyclic induced subgraph $\mathcal{J}$ of $\mathcal{G}_{\Uc,\dv}$, where $\mathcal{V}_{\mathcal{J}}$ denotes the set of nodes of the subgraph $\mathcal{J}$, and where $|\smallV|$ is the size of the message/subfile/node $\smallV$.
\end{lemma}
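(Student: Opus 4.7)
The plan is to adapt the standard maximum-acyclic-induced-subgraph (MAIS) converse for index coding to the present $N_0$-antenna broadcast setting, by coupling it with a high-SNR degree-of-freedom bound for the MISO channel. First, since $\mathcal{J}$ is acyclic, I would fix a topological ordering $v_1,v_2,\dots,v_n$ of its vertices (with $n=|\mathcal{V}_{\mathcal{J}}|$) such that every edge of $\mathcal{J}$ goes from a lower to a higher index. By the convention that an edge $u\to v$ in $\mathcal{G}_{\Uc,\dv}$ encodes the statement ``$v$'s receiver has $u$ in its assigned cache'', this ordering guarantees that the user requesting subfile $v_i$ holds none of $v_{i+1},\dots,v_n$ as cached side information inside $\mathcal{V}_{\mathcal{J}}$.

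Next, I would invoke a genie that delivers to every receiver, free of charge, the content of all nodes lying outside $\mathcal{V}_{\mathcal{J}}$; this can only lower the required delivery time, so it suffices to lower-bound the delay of the genie-aided problem. Combined with its original cached content, the receiver of $v_1$, being a source in the topological order, has empty $\mathcal{J}$-side information and must therefore decode $v_1$ from the broadcast signal alone. Proceeding inductively, once $v_1,\dots,v_{i-1}$ are declared available (as genie side information, which again only helps), the receiver of $v_i$ possesses every $\mathcal{J}$-vertex it could possibly need -- each such vertex corresponds to an incoming edge at $v_i$ and therefore has index strictly smaller than $i$ -- and so it can successfully decode $v_i$ from the same broadcast signal.

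Finally, I would collapse the receivers of $v_1,\dots,v_n$ into a single enhanced super-receiver that observes all of their individual channel outputs and must recover the entire message set of aggregate size $\sum_{\smallV\in\mathcal{V}_{\mathcal{J}}}|\smallV|$. The resulting point-to-point MIMO channel from the $N_0$-antenna transmitter to this super-receiver has high-SNR capacity at most $N_0\log(\mathrm{SNR})+o(\log(\mathrm{SNR}))$ per channel use; under the paper's normalization of a file to $\log(\mathrm{SNR})$ bits per unit time, a delivery duration of $T$ can therefore transport at most $N_0 T$ file units. Rearranging yields $T\geq \frac{1}{N_0}\sum_{\smallV\in\mathcal{V}_{\mathcal{J}}}|\smallV|$, as claimed.

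The main obstacle I expect is making the MISO upgrade rigorous, namely verifying that the cut-set/MIMO bound really survives the fact that the original receivers are physically separated and observe only their own noisy channels, so that collapsing them is legitimate once the genie and the inductive decoding have been invoked. A secondary care-point is matching the direction convention of the edges of $\mathcal{G}_{\Uc,\dv}$ to the side-information direction used in the induction, as a sign error there would invert the ordering and collapse the argument.
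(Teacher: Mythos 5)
Your proof is correct, but it takes a more self-contained route than the paper. The paper disposes of this lemma in three lines: it observes that the high-SNR $N_0$-antenna MISO BC is equivalent to a \emph{distributed} index coding problem with $N_0$ senders, each connected to the receivers by an independent unit-capacity broadcast link, and then directly invokes the acyclic-subgraph rate bound of \cite[Corollary 1]{li2017cooperative}, $\sum_{\smallV\in\mathcal{V}_{\mathcal{J}}}R_{\smallV}\leq\sum_{i\in[N_0]}C_{J_i}=N_0$ with $R_{\smallV}=|\smallV|/T$, which rearranges to the claim. You instead reconstruct the underlying argument from first principles: the topological ordering of the acyclic subgraph (with the edge convention correctly matched to the paper's ``edge $u\to v$ iff $v$'s cache holds $u$''), the genie that reveals everything outside $\mathcal{V}_{\mathcal{J}}$, the successive-decoding induction along the order, and finally a point-to-point MIMO cut-set bound giving at most $N_0$ file units per unit time. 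This is exactly the standard MAIS-type converse that the cited corollary encapsulates, so the two proofs rest on the same idea; yours buys independence from the external reference at the cost of having to be careful about the receiver-collapsing step (which is fine here: the super-receiver observes all relevant channel outputs, so it can emulate each virtual receiver in turn, and the MIMO DoF to any such super-receiver is at most $N_0$ regardless of how many physical users the subgraph involves), while the paper's citation-based proof is shorter and inherits rigor from \cite{li2017cooperative}. Both care-points you flag (the MISO upgrade and the edge-direction convention) are real but are handled correctly in your sketch.
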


\vspace{3pt}\emph{Proof.} The above lemma draws from~\cite[Corollary 1]{li2017cooperative} (see also~\cite[Corollary 2]{Sadeghi:16} for a simplified version), and is easily proved in the Appendix Section~\ref{proof:cor_dof}.\vspace{3pt}

\paragraph*{Creating large acyclic subgraphs}
Lemma~\ref{cor_dof} suggests the need to create (preferably large) acyclic subgraphs of $\mathcal{G}_{\mathcal{U},\dv}$. The following lemma describes how to properly choose a set of nodes to form a large acyclic subgraph.

\begin{lemma}\label{lem:cons_acyclic}
An acyclic subgraph $\mathcal{J}$ of $\mathcal{G}_{\Uc,\dv}$ corresponding to the index coding problem defined by $\Uc,\dv,\chi$ for any $\Uc$ with profile $\Lc$, is designed here to consist of all subfiles $W^{\boldsymbol{d_{\sigma_{s}(\lambda)}}(j)}_{\Tau_{\lambda}},~\forall j\in [\mathcal{L}_{\lambda}],~\forall \lambda\in [\Lambda]$ for all $\Tau_{\lambda}\subseteq [\Lambda]\setminus \{\sigma_s(1),\dots,\sigma_s(\lambda)\}$ where $\sigma_s\in S_{\Lambda}$ is the permutation such that $|\mathcal{U}_{\sigma_s(1)}|\geq |\mathcal{U}_{\sigma_s(2)}|\geq\dots\geq |\mathcal{U}_{\sigma_s(\Lambda)}|$.
\end{lemma}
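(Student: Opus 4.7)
My plan is to exhibit a topological ordering of the vertex set of $\mathcal{J}$ that every directed edge of this induced subgraph must respect; acyclicity will then follow for free. To streamline the notation, I first relabel the caches so that $\sigma_s$ is the identity, giving $\mathcal{L}_1\geq\mathcal{L}_2\geq\cdots\geq\mathcal{L}_\Lambda$. Under this relabeling, $\mathcal{J}$ consists of exactly the vertices $W^{\boldsymbol{d_\lambda}(j)}_{\Tau_\lambda}$ for $\lambda\in[\Lambda]$, $j\in[\mathcal{L}_\lambda]$, and $\Tau_\lambda\subseteq\{\lambda+1,\dots,\Lambda\}$. A quick sanity check is that $\lambda\notin\{\lambda+1,\dots,\Lambda\}$, so each listed vertex is a legitimate node of $\mathcal{G}_{\Uc,\dv}$ (recall that only subfiles with $\Tau\not\ni\lambda$ appear in that graph), and since $\mathcal{J}$ is defined purely by its vertex set it is automatically an induced subgraph.

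The heart of the argument is to read a strict monotonicity off the edge rule. Fix any directed edge $W^{\boldsymbol{d_\lambda}(j)}_{\Tau_\lambda}\to W^{\boldsymbol{d_{\lambda'}}(j')}_{\Tau_{\lambda'}}$ inside $\mathcal{J}$. By the defining rule of the side-information graph, such an edge exists precisely when $\lambda'\in\Tau_\lambda$. However, membership of the source vertex in $\mathcal{J}$ forces $\Tau_\lambda\subseteq\{\lambda+1,\dots,\Lambda\}$, so $\lambda'\in\Tau_\lambda$ can only occur when $\lambda'>\lambda$. Consequently, the cache-index $\lambda$ strictly increases along every directed edge of $\mathcal{J}$, and in particular intra-cache edges ($\lambda=\lambda'$) are automatically excluded since $\lambda\notin\Tau_\lambda$.

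Acyclicity is then immediate: any directed cycle would produce a chain $\lambda_1<\lambda_2<\dots<\lambda_k<\lambda_1$, which is impossible. Undoing the initial relabeling replaces $\lambda$ by $\sigma_s^{-1}(\lambda)$ but leaves the monotonicity, and hence the conclusion, intact. The only real subtlety to watch for is fixing the directionality convention of the edges in the index coding graph; once that convention is taken as stated in the setup, the argument reduces to the one-line monotonicity check above, so I do not anticipate any serious obstacle in turning this sketch into a full proof. I also expect that in the surrounding converse argument, the precise reason why this particular family of $\Tau_\lambda$'s is preferred will be that it simultaneously maximizes $\sum_{\smallV\in\mathcal{V}_{\mathcal{J}}}|\smallV|$ in Lemma~\ref{cor_dof} while keeping the subgraph acyclic, but that optimization is a separate matter from the present acyclicity claim.
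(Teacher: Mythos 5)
Your proof is correct and is essentially the paper's own argument: the paper likewise arranges the chosen nodes into rows indexed by $\lambda$ and observes that the edge rule ($\lambda'\in\Tau_\lambda$) combined with $\Tau_\lambda\subseteq[\Lambda]\setminus\{\sigma_s(1),\dots,\sigma_s(\lambda)\}$ forces every edge to point toward a strictly later row, which rules out cycles. The only additional content in the paper's proof --- which you correctly flag as a separate optimization issue --- is the count $\sum_{i=0}^{\Lambda}\sum_{r=1}^{\Lambda-i}|\Uc_{\sigma(r)}|\binom{\Lambda-r}{i}$ of nodes in the subgraph and the observation that the sorted permutation $\sigma_s$ maximizes it.
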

\vspace{3pt}\emph{Proof.} The proof, which can be found in the Appendix Section~\ref{proof:cons_acyclic}, is an adaptation of~\cite[Lemma 1]{WanTP15} to the current setting.\vspace{3pt}

\begin{remark}
The choice of the permutation $\sigma_s$ is critical for the development of a tight converse. Any other choice $\sigma\in S_\Lambda$ may result --- in some crucial cases --- in an acyclic subgraph with a smaller number of nodes and therefore a looser bound. This approach here deviates from the original approach in \cite[Lemma 1]{WanTP15}, which instead considered --- for each $\dv,\chi$, for the uniform user-to-cache association case of $K = \Lambda$ --- the set of \emph{all} possible permutations, that jointly resulted in a certain symmetry that is crucial to that proof. Here in our case, such symmetry would not serve the same purpose as it would dilute the non-uniformity in $\Lc$ that we are trying to capture. Our choice of a single carefully chosen permutation, allows for a bound which --- as it turns out --- is tight even in non-uniform cases.
The reader is also referred to Section~\ref{subsec:example} for an explanatory example.
\end{remark}
Having chosen an acyclic subgraph according to Lemma~\ref{lem:cons_acyclic}, we return to Lemma~\ref{cor_dof} and form --- by adding the sizes of all subfiles associated to the chosen acyclic graph --- the following lower bound
\begin{equation}
T(\Uc,\dv,\chi)\geq T^{LB}(\Uc,\dv,\chi)
\end{equation} where
\begin{align}
&T^{LB}(\Uc,\dv,\chi) \defeq  \frac{1}{N_0}\Bigg( \sum_{j=1}^{\mathcal{L}_{1}}\sum_{\Tau_{1}\subseteq [\Lambda]\setminus \{\sigma_s(1)\}}|W^{\boldsymbol{d_{\sigma_s(1)}}(j)}_{\Tau_{1}}|\nonumber \\
&+ \sum_{j=1}^{\mathcal{L}_{2}}\sum_{\Tau_{2}\subseteq [\Lambda]\setminus \{\sigma_s(1),\sigma_s(2)\}}|W^{\boldsymbol{d_{\sigma_s(2)}}(j)}_{\Tau_{2}}|+\dots \nonumber \\
&+ \sum_{j=1}^{\mathcal{L}_{\Lambda}}\sum_{\Tau_{\Lambda}\subseteq [\Lambda]\setminus \{\sigma_s(1),\dots,\sigma_s(\Lambda)\}}|W^{\boldsymbol{d_{\sigma_s(\Lambda)}}(j)}_{\Tau_{\Lambda}}|
\Bigg). \label{eq:TLB}
\end{align}

Our interest lies in a lower bound for the worst-case delivery time/delay associated to profile $\Lc$. Such a worst-case naturally corresponds to the scenario where all users request different files, i.e., where all the entries of the demand vector $\dv(\mathcal{U})$ are different. The corresponding lower bound can be developed by averaging over worst-case demands. Recalling our set $\mathcal{D}_{\Lc}$, the worst-case delivery time can thus be written as
\begin{align}
T^*(\Lc)&\defeq \min_{\chi} \max_{(\mathcal{U},\dv) \in (\mathcal{U}_{\Lc},[N]^K)} T(\mathcal{U},\dv,\chi)\\
&\overset{(a)}{\geq} \min_{\chi} \frac{1}{|\mathcal{D}_{\Lc}|} \sum_{\dv(\mathcal{U}) \in \mathcal{D}_{\Lc}} T(\dv(\mathcal{U}),\chi)\label{eq:alternativedefinitionofT}
\end{align} where in step (a), we used the following change of notation $T(\dv(\mathcal{U}),\chi)\dfn T(\mathcal{U},\dv,\chi)$ and averaged over worst-case demands.

With a given class/profile $\Lc$ in mind, in order to construct $\mathcal{D}_{\Lc}$ (so that we can then average over it), we will consider all demand vectors $\dv\in \mathcal{D}_{wc}$ for all permutations $\pi\in S_{\Lambda}$. 
Then for each $\dv$, we create the following set of $\Lambda$ vectors
\begin{align*}
&\boldsymbol{d^{'}_1}= (d_1 : d_{\mathcal{L}_1}),\\
&\boldsymbol{d^{'}_2}= (d_{\mathcal{L}_1+1} : d_{\mathcal{L}_1+\mathcal{L}_2}),\\
& \vdots \\
&\boldsymbol{d^{'}_{\Lambda}}= (d_{\sum_{i=1}^{\Lambda-1}\mathcal{L}_{i}~+1} : d_{K})
\end{align*}
and for each permutation $\pi\in S_{\Lambda}$ applied to the set $\{1,2,\dots,\Lambda\}$, a demand vector $\dv(\mathcal{U})$ is constructed as follows
\begin{align}
\dv(\mathcal{U})&\dfn(\boldsymbol{d_1},\boldsymbol{d_2},\dots,\boldsymbol{d_\Lambda})\\
&=(\boldsymbol{d^{'}_{\pi^{-1}(1)}},\boldsymbol{d^{'}_{\pi^{-1}(2)}},\dots,\boldsymbol{d^{'}_{\pi^{-1}(\Lambda)}}).
\end{align}
This procedure is repeated for all $\Lambda!$ permutations ${\pi\in S_{\Lambda}}$ and all $P(N,K)$ worst-case demands $\dv\in \mathcal{D}_{wc}$. This implies that the cardinality of $\mathcal{D}_{\Lc}$ is ${|\mathcal{D}_{\Lc}|=P(N,K)\cdot \Lambda!}$.

Using this designed set $\mathcal{D}_{\Lc}$, now the optimal worst-case delivery time in (\ref{eq:alternativedefinitionofT}) is bounded as
\begin{align}
T^{*}(\Lc) 
&= \min_{\chi}T(\Lc,\chi)\\
& \geq  \min_{\chi} \frac{1}{P(N,K)\Lambda!} \sum_{\dv(\mathcal{U}) \in \mathcal{D}_{\Lc}} T^{LB}(\dv(\mathcal{U}),\chi) \label{eq:lowerboundcompact}
\end{align}
where $T^{LB}(\dv(\mathcal{U}),\chi)$ is given by (\ref{eq:TLB}) for each reordered demand vector $\dv(\mathcal{U})\in \mathcal{D}_{\Lc}$. Rewriting the summation in (\ref{eq:lowerboundcompact}), we get
\begin{align}\label{eq:longinequality}
&\sum_{\dv(\mathcal{U})\in \mathcal{D}_{\Lc}}  T^{LB}(\dv(\mathcal{U}),\chi)= \nonumber \\
&\frac{1}{N_0}\sum_{i=0}^{\Lambda}\sum_{n\in[N]}\sum_{\Tau\subseteq[\Lambda]:|\Tau|=i} |W^n_{\Tau}| \cdot \underbrace{\sum_{\dv(\mathcal{U})\in \mathcal{D}_{\Lc}} \mathds{1}_{\mathcal{V}_{\mathcal{J}_s^{\dv(\mathcal{U})}}}(W^n_{\Tau})}_{\defeq Q_{i}(W^n_\Tau)}
\end{align} where $\mathcal{V}_{\mathcal{J}_s^{\dv(\mathcal{U})}}$ is the set of vertices in the acyclic subgraph chosen according to Lemma \ref{lem:cons_acyclic} for a given $\dv(\mathcal{U})$. In the above, $\mathds{1}_{\mathcal{V}_{\mathcal{J}_s^{\dv(\mathcal{U})}}}(W^n_{\Tau})$ denotes the indicator function which takes the value of 1 only if $W^n_{\Tau} \subset \mathcal{V}_{\mathcal{J}_s^{\dv(\mathcal{U})}}$, else it is set to zero.

A crucial step toward removing the dependence on $\Tau$, comes from the fact that
\begin{align}\label{eq:Qi}
Q_{i} &= Q_{i}(W^n_\Tau)\dfn \sum_{\dv(\mathcal{U})\in \mathcal{D}_{\Lc}} \mathds{1}_{\mathcal{V}_{\mathcal{J}_s^{\dv(\Uc)}}}(W^n_{\Tau}) \nonumber\\
=&{N-1 \choose K-1}\sum_{r=1}^{\Lambda}P(\Lambda-i-1,r-1)(\Lambda-r)!\mathcal{L}_{r} \nonumber\\
&\times P(K-1,\mathcal{L}_{r}-1) (K-\mathcal{L}_{r})! (\Lambda-i)
\end{align}
where we can see that the total number of times a specific subfile appears --- in the summation in \eqref{eq:longinequality}, over the set of all possible $\dv(\mathcal{U})  \in \mathcal{D}_{\Lc}$, and given our chosen permutation $\sigma_s$ 
--- is not dependent on the subfile itself but is dependent only on the number of caches $i=|\Tau|$ storing that subfile. The proof of \eqref{eq:Qi} can be found in Section~\ref{proof:lemmaQi}.

In the spirit of~\cite{WanTP15}, defining
\begin{equation}
x_i\dfn\sum_{n\in[N]}\sum_{\Tau\subseteq[\Lambda]:|\Tau|=i}|W^n_{\Tau}|
\end{equation}
to be the total amount of data stored in exactly $i$ helper nodes, we see that
\begin{equation}\label{eq:sumfiles}
N=\sum_{i=0}^{\Lambda}x_i=\sum_{i=0}^{\Lambda}\sum_{n\in[N]}\sum_{\Tau\subseteq[\Lambda]:|\Tau|=i}|W^n_{\Tau}|
\end{equation}
and we see that combining \eqref{eq:lowerboundcompact}, \eqref{eq:longinequality} and \eqref{eq:Qi}, gives
\begin{equation}\label{eq:compacteq}
T(\Lc,\chi)\geq \frac{1}{N_{0}}\sum_{i=0}^{\Lambda}\frac{Q_{i}}{P(N,K)\Lambda!}x_{i}.
\end{equation}

Now substituting \eqref{eq:Qi} into \eqref{eq:compacteq}, after some algebraic manipulations, we get that
\begin{align}
T(\Lc,\chi)&\geq \frac{1}{N_0}\sum_{i=0}^{\Lambda}\frac{\sum_{r=1}^{\Lambda-i}\mathcal{L}_{r} {\Lambda-r\choose i}}{N{\Lambda\choose i}}x_{i} \label{eq:LBwithxi}\\
&=\frac{1}{N_0}\sum_{i=0}^{\Lambda}\frac{x_{i}}{N}c_{i} \label{eq:LBwithxi_2}
\end{align}
where $c_{i}\triangleq \frac{\sum_{r=1}^{\Lambda-i}\mathcal{L}_r{\Lambda-r\choose i}}{{\Lambda\choose i}}$ decreases with $i\in \{0,1,\dots,\Lambda\}$.
The proof of the transition from \eqref{eq:compacteq} to \eqref{eq:LBwithxi}, as well as the monotonicity proof for the sequence $\{c_i\}_{i\in [\Lambda]\cup \{0\}}$, are given in Appendix Sections~\ref{proof:transition} and \ref{sec:monotonicity} respectively.

Under the file-size constraint given in \eqref{eq:sumfiles}, and given the following cache-size constraint
\begin{equation}
\sum_{i=0}^{\Lambda}i \cdot x_{i}\leq  \Lambda M \label{eq:constr2}
\end{equation}
the expression in~\eqref{eq:LBwithxi} serves as a lower bound on the delay of any caching-and-delivery scheme $\chi$ whose caching policy implies a set of $\{x_i\}$.

We then employ the Jensen's-inequality based technique of \cite[Proof of Lemma 2]{YuMA16} to minimize the expression in \eqref{eq:LBwithxi}, over all admissible $\{x_i\}$. Hence for any integer $\Lambda\gamma$, we have
\begin{equation}\label{eq:optimization1}
T(\Lc,\chi)\geq \frac{\sum_{r=1}^{\Lambda-\Lambda\gamma}\mathcal{L}_r{\Lambda-r\choose \Lambda\gamma}}{{\Lambda\choose \Lambda\gamma}}
\end{equation}
whereas for all other values of $\Lambda\gamma$, this is extended to its convex lower envelop.
The detailed derivation of \eqref{eq:optimization1} can again be found in Appendix Section~\ref{lastproof}.

This concludes lower bounding  $\max_{(\mathcal{U},\dv) \in (\mathcal{U}_{\Lc},[N]^K)} T(\mathcal{U},\dv,\chi)$, and thus --- given that the right hand side of \eqref{eq:optimization1} is independent of $\chi$ --- lower bounds the performance for any scheme $\chi$, which hence concludes the proof of the converse for Theorem~\ref{thm:resmultiant} (and consequently for Theorem~\ref{thm:PerClassSingleAntenna} after setting $N_0 = 1$).  \qed

\subsection{Proof of the Converse for Corollary~\ref{cor:ressymMulti} \label{sec:ConverseUniform}}
For the uniform case of $\Lc = [\frac{K}{\Lambda},\frac{K}{\Lambda},\dots,\frac{K}{\Lambda}]$, the lower bound in \eqref{eq:optimization1} becomes
\begin{align}  \frac{1}{N_0}\frac{\sum_{r=1}^{\Lambda-\Lambda\gamma}\mathcal{L}_r{\Lambda-r\choose \Lambda\gamma}}{{\Lambda\choose \Lambda\gamma}}
& =\frac{1}{N_0}\frac{K}{\Lambda}\frac{\sum_{r=1}^{\Lambda-\Lambda\gamma}{\Lambda-r\choose \Lambda\gamma}}{{\Lambda\choose \Lambda\gamma}} \\
  &\overset{(a)}{=} \frac{1}{N_0}\frac{K}{\Lambda}\frac{{\Lambda \choose \Lambda\gamma+1}}{{\Lambda\choose \Lambda\gamma}}  \\
	& = \frac{K(1-\gamma)}{N_0(\Lambda\gamma+1)}
  \end{align}
where the equality in step (a) is due to Pascal's triangle.  \qed



\subsection{Example for $N=K=9$, $N_{0}=2$ and $\Lc=(4,3,2)$\label{subsec:example}}
We here give an example of deriving the converse for Theorem \ref{thm:resmultiant}, emphasizing on how to convert the caching problem to the index-coding problem, and how to choose acyclic subgraphs. We consider the case of having $K=9$ receiving users, and a transmitter with $N_{0}=2$ transmit antennas having access to a library of $N=9$ files of unit size. We also assume that there are $\Lambda=3$ caching nodes, of average normalized cache capacity $\gamma$. We will focus on deriving the bound for user-to-cache association profile $\Lc=(4,3,2)$, meaning that we are interested in the setting where one cache is associated to $4$ users, one cache to $3$ users and one cache associated to $2$ users.

Each file $W^n$ is split into $2^{\Lambda}=8$ disjoint subfiles $W^{i}_{\Tau}, \Tau\in 2^{[3]}$ where each $\Tau$ describes the set of helper nodes in which $W^{i}_{\Tau}$ is cached. For instance, $W^1_{13}$ refers to the part of file $W^1$ that is stored in the first and third caching nodes.

As a first step, we present the construction of the set $\mathcal{D}_{\Lc}$. To this end, let us start by considering the demand $\dv=(1,2,3,4,5,6,7,8,9)$ and one of the $6$ permutations $\pi\in S_{3}$; for example, let us start by considering $\pi(1)=2,\pi(2)=3,\pi(3)=1$. Toward reordering $\dv$ to reflect $\Lc$, we construct
\begin{align*}
&\boldsymbol{d^{'}_1}=(1,2,3,4),
~~\boldsymbol{d^{'}_2}=(5,6,7),
~~\boldsymbol{d^{'}_3}=(8,9)
\end{align*}
to obtain the reordered demand vector
\begin{align*}
\dv(\mathcal{U})&=(\boldsymbol{d^{'}_{\pi^{-1}(1)}},\boldsymbol{d^{'}_{\pi^{-1}(2)}},\boldsymbol{d^{'}_{\pi^{-1}(3)}})\\
&=(\boldsymbol{d^{'}_{3}},\boldsymbol{d^{'}_{1}},\boldsymbol{d^{'}_{2}})
\end{align*}
which in turn yields $\boldsymbol{d_1}=(8,9),\boldsymbol{d_2}=(1,2,3,4),\boldsymbol{d_3}=(5,6,7)$. Similarly, we can construct the remaining $5$ demands $\dv(\mathcal{U})$ associated to the other $5$ permutations $\pi\in S_{3}$. Finally, the procedure is repeated for all other worst-case demand vectors. These vectors are part of set $\mathcal{D}_{\Lc}$.

With the users demands $\dv(\mathcal{U})$ known to the server, the delivery problem is translated into an index coding problem with a side information graph of $K 2^{\Lambda-1}=9\cdot 2^{2}$ nodes. For each requested file $W^{\boldsymbol{d_\lambda}(j)}$, we write down the $4$ subfiles that the requesting user does not have in its assigned cache. Hence, a given user of the caching problem requiring $4$ subfiles from the main server, is replaced by $4$ different new users in the index coding problem. Each of these users request a different subfile and are connected to the same cache $\lambda$ as the original user.
\begin{figure*}
\[
\begin{array}{c@{}c@{}ccc}
\boldsymbol{d_1}=(1,2,3,4),\boldsymbol{d_2}=(5,6,7),
&~~&
\boldsymbol{d_1}=(1,2,3,4),\boldsymbol{d_2}=(8,9),
&
\boldsymbol{d_1}=(5,6,7),\boldsymbol{d_2}=(1,2,3,4),\\

\boldsymbol{d_3}=(8,9)&~~&\boldsymbol{d_3}=(5,6,7)&\boldsymbol{d_3}=(8,9)\\
  ~~&~~&~~\\
\begin{array}{cccc}
  \underline{W^{1}_{\emptyset}} & \underline{W^{1}_{2}} & \underline{W^{1}_{3}} & \underline{W^{1}_{23}}\\
  \underline{W^{2}_{\emptyset}} & \underline{W^{2}_{2}} & \underline{W^{2}_{3}} & \underline{W^{2}_{23}}\\
  \underline{W^{3}_{\emptyset}} & \underline{W^{3}_{2}} & \underline{W^{3}_{3}} & \underline{W^{3}_{23}}\\
  \underline{W^{4}_{\emptyset}} & \underline{W^{4}_{2}} & \underline{W^{4}_{3}} & \underline{W^{4}_{23}}\\
  \underline{W^{5}_{\emptyset}} & W^{5}_{1} & \underline{W^{5}_{3}} & W^{5}_{13}\\
  \underline{W^{6}_{\emptyset}} & W^{6}_{1} & \underline{W^{6}_{3}} & W^{6}_{13}\\
  \underline{W^{7}_{\emptyset}} & W^{7}_{1} & \underline{W^{7}_{3}} & W^{7}_{13}\\
  \underline{W^{8}_{\emptyset}} & W^{8}_{1} & W^{8}_{2} & W^{8}_{12}\\
  \underline{W^{9}_{\emptyset}} & W^{9}_{1} & W^{9}_{2} & W^{9}_{12}\\
  \end{array} &~~&
\begin{array}{cccc}
  \underline{W^{1}_{\emptyset}} & \underline{W^{1}_{2}} & \underline{W^{1}_{3}} & \underline{W^{1}_{23}}\\
  \underline{W^{2}_{\emptyset}} & \underline{W^{2}_{2}} & \underline{W^{2}_{3}} & \underline{W^{2}_{23}}\\
  \underline{W^{3}_{\emptyset}} & \underline{W^{3}_{2}} & \underline{W^{3}_{3}} & \underline{W^{3}_{23}}\\
  \underline{W^{4}_{\emptyset}} & \underline{W^{4}_{2}} & \underline{W^{4}_{3}} & \underline{W^{4}_{23}}\\
  \underline{W^{5}_{\emptyset}} & W^{5}_{1} & \underline{W^{5}_{2}} & W^{5}_{12}\\
  \underline{W^{6}_{\emptyset}} & W^{6}_{1} & \underline{W^{6}_{2}} & W^{6}_{12}\\
  \underline{W^{7}_{\emptyset}} & W^{7}_{1} & \underline{W^{7}_{2}} & W^{7}_{12}\\
  \underline{W^{8}_{\emptyset}} & W^{8}_{1} & W^{8}_{3} & W^{8}_{13}\\
  \underline{W^{9}_{\emptyset}} & W^{9}_{1} & W^{9}_{3} & W^{9}_{13}\\

  \end{array} &
\begin{array}{cccc}
  \underline{W^{1}_{\emptyset}} & \underline{W^{1}_{1}} & \underline{W^{1}_{3}} & \underline{W^{1}_{13}} \\
  \underline{W^{2}_{\emptyset}} & \underline{W^{2}_{1}} & \underline{W^{1}_{3}} & \underline{W^{2}_{13}} \\
  \underline{W^{3}_{\emptyset}} & \underline{W^{3}_{1}} & \underline{W^{3}_{3}} & \underline{W^{3}_{13}} \\
  \underline{W^{4}_{\emptyset}} & \underline{W^{4}_{1}} & \underline{W^{4}_{3}} & \underline{W^{4}_{13}} \\
  \underline{W^{5}_{\emptyset}} & W^{5}_{2} & \underline{W^{5}_{3}} & W^{5}_{23} \\
  \underline{W^{6}_{\emptyset}} & W^{6}_{2} & \underline{W^{6}_{3}} & W^{6}_{23} \\
  \underline{W^{7}_{\emptyset}} & W^{7}_{2} & \underline{W^{7}_{3}} & W^{7}_{23} \\
  \underline{W^{8}_{\emptyset}} & W^{8}_{1} & W^{8}_{2} & W^{8}_{12} \\
  \underline{W^{9}_{\emptyset}} & W^{9}_{1} & W^{9}_{2} & W^{9}_{12} \\
  \end{array} \\
  ~~&~~&~~\\
\boldsymbol{d_1}=(5,6,7),\boldsymbol{d_2}=(8,9),
&~~&
\boldsymbol{d_1}=(8,9),\boldsymbol{d_2}=(1,2,3,4),
&
\boldsymbol{d_1}=(8,9),\boldsymbol{d_2}=(5,6,7),\\

\boldsymbol{d_3}=(1,2,3,4)&~~&\boldsymbol{d_3}=(5,6,7)&\boldsymbol{d_3}=(1,2,3,4)\\
  ~~&~~&~~\\
\begin{array}{cccc}
  \underline{W^{1}_{\emptyset}} & \underline{W^{1}_{1}} & \underline{W^{1}_{2}} & \underline{W^{1}_{12}} \\
  \underline{W^{2}_{\emptyset}} & \underline{W^{2}_{1}} & \underline{W^{2}_{2}} & \underline{W^{2}_{12}} \\
  \underline{W^{3}_{\emptyset}} & \underline{W^{3}_{1}} & \underline{W^{3}_{2}} & \underline{W^{3}_{12}} \\
  \underline{W^{4}_{\emptyset}} & \underline{W^{4}_{1}} & \underline{W^{4}_{2}} & \underline{W^{4}_{12}} \\
  \underline{W^{5}_{\emptyset}} & \underline{W^{5}_{2}} & W^{5}_{3} & W^{5}_{23} \\
  \underline{W^{6}_{\emptyset}} & \underline{W^{6}_{2}} & W^{6}_{3} & W^{6}_{23} \\
  \underline{W^{7}_{\emptyset}} & \underline{W^{7}_{2}} & W^{7}_{3} & W^{7}_{23} \\
  \underline{W^{8}_{\emptyset}} & W^{8}_{1} & W^{8}_{3} & W^{8}_{13} \\
  \underline{W^{9}_{\emptyset}} & W^{9}_{1} & W^{9}_{3} & W^{9}_{13} \\
  \end{array} &~~&
\begin{array}{cccc}
  \underline{W^{1}_{\emptyset}} & \underline{W^{1}_{1}} & \underline{W^{1}_{3}} & \underline{W^{1}_{13}} \\
  \underline{W^{2}_{\emptyset}} & \underline{W^{2}_{1}} & \underline{W^{2}_{3}} & \underline{W^{2}_{13}} \\
  \underline{W^{3}_{\emptyset}} & \underline{W^{3}_{1}} & \underline{W^{3}_{3}} & \underline{W^{3}_{13}} \\
  \underline{W^{4}_{\emptyset}} & \underline{W^{4}_{1}} & \underline{W^{4}_{3}} & \underline{W^{4}_{13}} \\
  \underline{W^{5}_{\emptyset}} & \underline{W^{5}_{1}} & W^{5}_{2} & W^{5}_{12} \\
  \underline{W^{6}_{\emptyset}} & \underline{W^{6}_{1}} & W^{6}_{2} & W^{6}_{12} \\
  \underline{W^{7}_{\emptyset}} & \underline{W^{7}_{1}} & W^{7}_{2} & W^{7}_{12} \\
  \underline{W^{8}_{\emptyset}} & W^{8}_{2} & W^{8}_{3} & W^{8}_{23} \\
  \underline{W^{9}_{\emptyset}} & W^{9}_{2} & W^{9}_{3} & W^{9}_{23} \\
  \end{array} &
  \begin{array}{cccc}
  \underline{W^{1}_{\emptyset}} & \underline{W^{1}_{1}} & \underline{W^{1}_{2}} & \underline{W^{1}_{12}} \\
  \underline{W^{2}_{\emptyset}} & \underline{W^{2}_{1}} & \underline{W^{2}_{2}} & \underline{W^{2}_{12}} \\
  \underline{W^{3}_{\emptyset}} & \underline{W^{3}_{1}} & \underline{W^{3}_{2}} & \underline{W^{3}_{12}} \\
  \underline{W^{4}_{\emptyset}} & \underline{W^{4}_{1}} & \underline{W^{4}_{2}} & \underline{W^{4}_{12}} \\
  \underline{W^{5}_{\emptyset}} & \underline{W^{5}_{1}} & W^{5}_{3} & W^{5}_{13} \\
  \underline{W^{6}_{\emptyset}} & \underline{W^{6}_{1}} & W^{6}_{3} & W^{6}_{13} \\
  \underline{W^{7}_{\emptyset}} & \underline{W^{7}_{1}} & W^{7}_{3} & W^{7}_{13} \\
  \underline{W^{8}_{\emptyset}} & W^{8}_{2} & W^{8}_{3} & W^{8}_{23} \\
  \underline{W^{9}_{\emptyset}} & W^{9}_{2} & W^{9}_{3} & W^{9}_{23} \
  \end{array} \\
\end{array}
\]
\caption{Nodes of the side information graphs corresponding to demand vector $\dv=(1,2,3,4,5,6,7,8,9)$ (profile $\Lc=(4,3,2)$).}
 \label{fig:graphs}
\end{figure*}
The nodes of the $6$ side-information graphs corresponding to the aforementioned vectors $\dv(\mathcal{U})$ (one for each permutation $\pi\in S_{3}$) for demand $\dv=(1,2,3,4,5,6,7,8,9)$, are depicted in Figure~\ref{fig:graphs}.

For each side-information graph, we develop a lower bound as in Lemma~\ref{cor_dof}. We recall that the lemma applies to acyclic subgraphs, which we create as follows;
for each permutation\footnote{We caution the reader not to confuse the current permutations ($\sigma$) that are used to construct large-sized acyclic graphs, with the aforementioned permutations $\pi$ which are used to construct $\mathcal{D}_{\Lc}$.} $\sigma\in S_3$, a set of nodes forming an acyclic subgraph is
\begin{align*}
&\{W^{\boldsymbol{d_{\sigma(1)}}(j)}_{\Tau_{1}}\}_{j=1}^{|\mathcal{U}_{\sigma(1)}|}\;\text{ for all}\; \Tau_{1}\subseteq \{1,2,3\}\setminus{\{\sigma(1)\}},\\
&\{W^{\boldsymbol{d_{\sigma(2)}}(j)}_{\Tau_{2}}\}_{j=1}^{|\mathcal{U}_{\sigma(2)}|} \;\text{ for all}\; \Tau_{2}\subseteq \{1,2,3\}\setminus{\{\sigma(1),\sigma(2)\}},\\
&\{W^{\boldsymbol{d_{\sigma(3)}}(j)}_{\Tau_{3}}\}_{j=1}^{|\mathcal{U}_{\sigma(3)}|} \;\text{ for all}\; \Tau_{3}\subseteq \{1,2,3\}\setminus{\{\sigma(1),\sigma(2),\sigma(3)\}}.
\end{align*}
Based on this construction of acyclic graphs, our task now is to choose a permutation $\sigma_s\in S_3$ that forms the maximum-sized acyclic subgraph. For the case where $\boldsymbol{d_1}=(8,9),\boldsymbol{d_2}=(1,2,3,4)$ and $\boldsymbol{d_3}=(5,6,7)$, it can be easily verified that such a permutation $\sigma_s$ is the one with $\sigma_s(1)=2$,$\sigma_s(2)=3$ and $\sigma_s(3)=1$.
In Figure~\ref{fig:graphs}, for each of the six graphs, we underline the nodes corresponding to the acyclic subgraph that is formed by such permutation $\sigma_s$. The outer bound now involves adding the sizes of these chosen (underlined) nodes.
For example, for the demand $\dv(\mathcal{U})=((8,9),(1,2,3,4),(5,6,7))$ (this corresponds to the lower center graph), the lower bound in~\eqref{eq:indexbound} becomes
\begin{align}
 T(\dv(\mathcal{U}))&\geq \frac{1}{2}\left(|W^{1}_{\emptyset}|+|W^{1}_{1}|+|W^{1}_{3}|+|W^{1}_{13}|+|W^{2}_{\emptyset}|\right.\nonumber\\
 &+|W^{2}_{1}|+|W^{2}_{3}|+|W^{2}_{13}|+|W^{3}_{\emptyset}|+|W^{3}_{1}|\nonumber\\
 &+|W^{3}_{3}|+|W^{3}_{13}|+|W^{4}_{\emptyset}|+|W^{4}_{1}|+|W^{4}_{3}|\nonumber\\
 &+|W^{4}_{13}|+|W^{5}_{\emptyset}|+|W^{5}_1|+|W^6_{\emptyset}|+|W^6_1|\nonumber\\
&\left.+|W^7_{\emptyset}|+|W^7_1|+|W^{8}_{\emptyset}|+|W^{9}_{\emptyset}|\right).
\end{align}
The lower bounds for the remaining $5$ vectors $\dv(\mathcal{U})$ for the same $\dv=(1,2,3,4,5,6,7,8,9)$, are given in a similar way, again by adding the (underlined) nodes of the corresponding acyclic subgraphs (again see Figure~\ref{fig:graphs}).

Subsequently, the procedure is repeated for all $P(N,K)=K!=9!$ worst-case demand vectors $\dv\in \mathcal{D}_{wc}$. Finally, all the $P(N,K)\cdot\Lambda!=9!\cdot 3!$ bounds are averaged to get
\begin{align} \label{eq:example_final}
&T(\Lc,\chi) \geq  \frac{1}{2}\frac{1}{9!\cdot3!}\nonumber \\
&\!\!\! \sum_{\dv(\mathcal{U})\in \mathcal{D}_{\Lc}}\sum_{\lambda\in[3]}\sum_{j=1}^{\mathcal{L}_{\lambda}}\sum_{\Tau_{\lambda}\subseteq [3]\setminus \{\sigma_s(1),\dots,\sigma_s(\lambda)\}}\!\!\!\!\!\!\!\!\!\!|W^{\boldsymbol{d_{\boldsymbol{\sigma_s(\lambda)}}}(j)}_{\Tau_{\lambda}}|
\end{align}
which is rewritten as
\begin{align}\label{eq:example_step1}
&T(\Lc,\chi) \geq  \frac{1}{2}\frac{1}{9!\cdot3!} \nonumber \\
&\sum_{i=0}^{3}\sum_{n\in[9]}\sum_{\Tau\subseteq[3]:|\Tau|=i} |W^n_{\Tau}| \cdot \underbrace{\sum_{\dv(\mathcal{U})\in \mathcal{D}_{\Lc}} \mathds{1}_{\mathcal{V}_{\mathcal{J}_s^{\dv(\mathcal{U})}}}(W^n_{\Tau})}_{Q_{i}(W^n_\Tau)}.
\end{align}
After the evaluation of the term $Q_i(W^n_\Tau)$, the bound in \eqref{eq:example_step1} can be written in a more compact form as
\begin{align}\label{eq:exbound}
T(\Lc,\chi)& \geq  \frac{1}{2}\sum_{i=0}^{3}\frac{\sum_{r=1}^{3-i}\mathcal{L}_r{3-r\choose i}}{9{3\choose i}}x_{i}\\
&\geq Conv\Bigg( \frac{1}{2}\frac{\sum_{r=1}^{3-i}\mathcal{L}_r{3-r\choose i}}{{3\choose i}}\Bigg)\label{eq:exbound2}
\end{align}
where the proof of the transition from (\ref{eq:example_step1}) to (\ref{eq:exbound}) and from (\ref{eq:exbound}) to (\ref{eq:exbound2}) can be found in the general proof (Section~\ref{sec:converse}).


\section{Conclusions\label{sec:discussion}}
We have treated the multi-sender coded caching problem with shared caches which can be seen as an information-theoretically simplified representation of some instances of the so-called cache-aided heterogeneous networks, where one or more transmitters communicate to a set of users, with the assistance of smaller nodes that can serve as caches.

The work is among the first --- after the work in \cite{WanTP15} --- to employ index coding as a means of providing (in this case, exact) outer bounds for more involved cache-aided network topologies that better capture aspects of cache-aided wireless networks, such as having shared caches and a variety of user-to-cache association profiles. Dealing with such non uniform profiles, raises interesting challenges in redesigning converse bounds as well as redesigning coded caching which is known to generally thrive on uniformity. Our effort also applied to the related problem of coded caching with multiple file requests.
In addition to crisply quantifying the (adverse) effects of user-to-cache association non-uniformity, the work also revealed a multiplicative relationship between multiplexing gain and cache redundancy, thus providing further evidence of the powerful impact of jointly introducing a modest number of antennas and a modest number of helper nodes that serve as caches. We believe that the result can also be useful in providing guiding principles on how to assign shared caches to different users, especially in the presence of multiple senders. Finally we believe that the current presented adaptation of the outer bound technique to non-uniform settings may also be useful in analyzing different applications like distributed computing \cite{LiAliAvestimehrComputIT18,PLE:18a,KonstantinidisRamamoorthyArxiv18,YanYangWiggerArxiv18,MingyueJiISIT18} or data shuffling \cite{AttiaTandon16,AttiaTandonISIT18,WanTuninettiShuffling18,MohajerISIT18} which can naturally entail such non uniformities.


\section{Appendix \label{sec:Appendix}}


\subsection{Proof of Lemma \ref{cor_dof} \label{proof:cor_dof}}
In the addressed problem, we consider a MISO broadcast channel with $N_0$ antennas at the transmitter serving $K$ receivers with some side information due to caches. In the wired setting this (high-SNR setting) is equivalent to the distributed index coding problem with $N_0$ senders $J_1,\dots,J_{N_0}$, all having knowledge of the entire set of messages, and each being connected via an (independent) broadcast line link of capacity $C_{J_i}=1, i\in[N_0]$ to the $K$ receivers which hold side information. This multi-sender index coding problem is addressed in~\cite{li2017cooperative}. By adapting the achievable rate result in \cite[Corollary1]{li2017cooperative} to our problem, we get
\begin{equation}
\sum_{\smallV\in \mathcal{V}_{\mathcal{J}}}R_{\smallV}\leq \sum_{i\in[N_0]}C_{J_i}
\end{equation}
($R_{\smallV} = \frac{|\smallV|}{T}$ is the rate for message $\smallV$), that yields
\begin{equation}\label{eq:lemma_1}
\sum_{\smallV\in \mathcal{V}_{\mathcal{J}}}\frac{|\smallV|}{T}\leq N_0
\end{equation}
which, when inverted, gives the bound in Lemma~\ref{cor_dof}. \qed

\subsection{Proof of Lemma \ref{lem:cons_acyclic} \label{proof:cons_acyclic}}
Consider a permutation $\sigma$ where the subfiles $W^{\boldsymbol{d_{\sigma(\lambda)}}(j)}_{\Tau_\lambda},\forall j\in\mathcal{U}_{\sigma(\lambda)}$ for all $\Tau_\lambda\subseteq[\Lambda]\setminus\{\sigma(1),\dots,\sigma(\lambda)\}$ are all placed in row $\lambda$ of a matrix whose rows are labeled by $\lambda = 1,2,\dots,\Lambda$. The index coding users corresponding to subfiles in row $\lambda$ only know (as side information) subfiles $W^{d_k}_{\Tau}, \ \Tau\ni \sigma(\lambda)$. Consequently each user/node of row $\lambda$ does not know any of the subfiles in the same row\footnote{Notice that the index coding users/nodes who are associated to the same cache, are not linked by any edge in the corresponding graph.} nor in the previous rows. As a result, the proposed set of subfiles chosen according to permutation $\sigma$, forms a subgraph that does not contain any cycle.

A basic counting argument can tell us that the number of subfiles --- in the acyclic subgraph formed by any permutation $\sigma\in S_{\Lambda}$ --- that are stored in exactly $i$ caches, is
\begin{equation}\label{eq:no_subfiles}
\sum_{r=1}^{\Lambda-i}|\Uc_{\sigma(r)}|{\Lambda-r\choose i}.
\end{equation} This means that the total number of subfiles in the acyclic subgraph is simply
\begin{equation}
\sum_{i=0}^{\Lambda}\sum_{r=1}^{\Lambda-i}|\Uc_{\sigma(r)}|{\Lambda-r\choose i}.
\end{equation}
This number is maximized when the permutation $\sigma$ guarantees that the vector $(|\Uc_{\sigma(1)}|,|\Uc_{\sigma(2)}|,\dots,|\Uc_{\sigma(\Lambda)}|)$ is in descending order. This maximization is achieved with our choice of the ordering permutation $\sigma_s$ (as this was defined in the notation part) when constructing the acyclic graphs.
 \qed

\subsection{Proof of Equation (\ref{eq:Qi}) \label{proof:lemmaQi}}
Here, through a combinatorial argument, we derive $Q_i(W^n_\Tau)$, that is the number of times that a subfile $W^n_\Tau$ with index size $|\Tau|=i$ appears in all the acyclic subgraphs chosen to develop the lower bound.

There are ${N-1\choose K-1}$ subsets $\Upsilon_{m},m\in[{N-1\choose K-1}]$ out of $N\choose K$ unordered subsets of $K$ files from the set $\{W^{j},j\in[N]\}$ that contain file $W^{n}$, and for each $\Upsilon_{m}$ there exists $K!$ different demand vectors $\dv'$. For each $\Upsilon_{m}$, among all possible demand vectors, a subfile $W^{n}_{\Tau}: |\Tau|=i$ appears in the side information graph an equal number of times. For a fixed $\Upsilon_{m}$, file $W^{n}$ is requested by a user connected to any helper node with a certain cardinality $\mathcal{L}_r$. 
By construction, $Q_{i}(W^n_\Tau)$ can be rewritten as
\begin{align}
Q_{i}(W^n_\Tau)&=\sum_{\dv\in \mathcal{D}_{wc}}\sum_{\pi\in S_{\Lambda}} \mathds{1}_{{\mathcal{V}_{\mathcal{J}_{s}^{\dv_r(\Uc)}}}}(W^{n}_{\Tau})\notag\\
&={N-1 \choose K-1}\sum_{r=1}^{\Lambda}\sum_{\dv'_{r}\in \mathcal{D}_{wc}}\sum_{\pi\in S_{\Lambda}} \mathds{1}_{{\mathcal{V}_{\mathcal{J}_{s}^{\dv'_r(\Uc)}}}}(W^{n}_{\Tau})\label{Qi} \notag
\end{align} where $\dv'_{r}$ denotes the subset of all demand vectors from $\Upsilon_{m}$ such that $n\in \boldsymbol{d_\lambda}:|\boldsymbol{d_\lambda}|=\mathcal{L}_r$.
The number of chosen maximum acyclic subgraphs containing $W^{n}_{\Tau}$ that arise from all the demand vectors $\dv'_{r}(\Uc)$ is evaluated as follows.
After fixing the demands such that $n\in\boldsymbol{d_\lambda}:|\boldsymbol{d_\lambda}|=\mathcal{L}_r $, then $W^n_\Tau$ appears in the side information graph only if it is requested by a user connected to helper node $\lambda$ such that $\lambda\notin\Tau$, which corresponds to $(\Lambda-i)$ different \textit{available} positions in the demand vector $\boldsymbol{d'}_{r}(\Uc)$, since $|\Tau|=i$.
After fixing one of the $(\Lambda-i)$ positions occupied by $\boldsymbol{d_\lambda}:|\boldsymbol{d_\lambda}|=\mathcal{L}_r$, for the remaining demands ${\boldsymbol{d_\lambda}}:|\boldsymbol{d_\lambda}|=\mathcal{L}_j,\forall j\in[\Lambda]\setminus\{r\}$ there are $P(\Lambda-i-1,r-1)\cdot(\Lambda-r)!$ possible ways to be placed into $\dv$. After fixing the order of $\boldsymbol{d_\lambda},\forall \lambda\in [\Lambda]$ in $\dv$ and $n\in \boldsymbol{d_\lambda}:|\boldsymbol{d_\lambda}|=\mathcal{L}_r$, there are $\mathcal{L}_r$ different positions in which $n$ can be placed in $\boldsymbol{d_\lambda}:|\boldsymbol{d_\lambda}|=\mathcal{L}_r$. This leaves out $\mathcal{L}_{r}-1$ positions with $K-1$ different numbers from the considered set $\Upsilon_{m}\setminus{\{n\}}$, and the remaining $K-\mathcal{L}_r$ positions in $\dv'_{r}$ are filled with $K-\mathcal{L}_r$ numbers. 
Therefore, there exist $\mathcal{L}_rP(K-1,\mathcal{L}_r-1)(K-\mathcal{L}_r)!$ different demand vectors where the subfile $W^{n}_{\Tau}$ will appear in the associated maximum acyclic subgraphs.
Hence, the above jointly tell us that
\begin{align}
&Q_{i}(W^n_\Tau)={N-1 \choose K-1}\sum_{r=1}^{\Lambda}P(\Lambda-i-1,r-1)\nonumber\\
&\times(\Lambda-r)!\mathcal{L}_rP(K-1,\mathcal{L}_r-1)(K-\mathcal{L}_r)!(\Lambda-i)
\end{align}
which concludes the proof. \qed

\subsection{Transition from Equation (\ref{eq:compacteq}) to (\ref{eq:LBwithxi}) \label{proof:transition}}
The coefficient of $x_i$ in equation (\ref{eq:compacteq}), can be further simplified as follows
\begin{align*} \label{eq:finanumber}
&\frac{Q_{i}}{\Lambda!P(N,K)} \\
=&\frac{(N-1)!(N-K)!}{(K-1)!(N-K)!\Lambda!N!}\sum_{r=1}^{\Lambda}\mathcal{L}_rP(K-1,\mathcal{L}_r-1)
\\ &(K-\mathcal{L}_r)!(\Lambda-i)P(\Lambda-i-1,r-1)(\Lambda-r)!\nonumber\\
=&\frac{1}{(K-1)!\Lambda!N}\sum_{r=1}^{\Lambda}\mathcal{L}_r\nonumber\\ &\frac{(K-1)!(K-\mathcal{L}_r)!(\Lambda-i)(\Lambda-i-1)!(\Lambda-r)!}{(K-\mathcal{L}_r)!(\Lambda-i-r)!}\nonumber\\
=&\frac{1}{\Lambda!N}\sum_{r=1}^{\Lambda}\mathcal{L}_r\frac{(K-1)!(\Lambda-i)!(\Lambda-r)!}{(K-1)!(\Lambda-i-r)!}\nonumber\\
=&\frac{1}{N}\sum_{r=1}^{\Lambda}L_{\pi_s(r)}\frac{(\Lambda-i)!(\Lambda-r)!i!}{\Lambda!(\Lambda-i-r)!i!}\nonumber\\
=&\frac{1}{N}\sum_{r=1}^{\Lambda}\mathcal{L}_r\frac{{\Lambda-r\choose i}}{{\Lambda\choose i}}
\end{align*}
which concludes the proof. \qed

\subsection{Monotonicity of $\{c_i\}$ \label{sec:monotonicity}}
Let us define the following sequences
\begin{eqnarray}
(a_n)_{n\in[\Lambda-i]}&\dfn& \bigg\{\frac{{{\Lambda-n}\choose i}}{{\Lambda \choose i}}, n\in [\Lambda-i]\bigg\} \\
(b_n)_{n\in[\Lambda-i-1]}&\dfn&\bigg\{\frac{{{\Lambda-n}\choose {i+1}}}{{\Lambda \choose {i+1}}}, n\in [\Lambda-i-1]\bigg\}.
\end{eqnarray}
 It is easy to verify that $a_n\geq b_n, \; \forall n\in [\Lambda-i]$. Consider now the set of scalar numbers $\{V_j, j\in[\Lambda-i], V_j\in \mathbb{N}\}$. The inequality $a^{*}_n\geq b^{*}_n, \; \forall n\in [\Lambda-i]$ holds for
\begin{equation}
{(a^{*}_n)_{n\in[\Lambda-i]}\dfn\big\{V_n\cdot a_n, n\in [\Lambda-i]\big\}}
\end{equation} and
\begin{equation}
{(b^{*}_n)_{n\in[\Lambda-i-1]}\dfn\big\{V_n\cdot b_n, n\in [\Lambda-i-1]\big\}}.
\end{equation} As a result, we have
\begin{equation}
\sum_{n\in[\Lambda-i]}V_n\cdot a_n \geq  \sum_{n\in[\Lambda-i]}V_n\cdot b_n \label{ineq:Lnan_Lnbn}
\end{equation}
which proves that $c_i\geq c_{i+1}$. \qed

\subsection{Proof of (\ref{eq:optimization1}) \label{lastproof}}

Through the respective change of variables  $t\dfn \Lambda\frac{M}{N}$, $x'_{i}\dfn \frac{x_{i}}{N}$ and $c'_{i}\dfn \frac{c_i}{N_0}$, in equations  (\ref{eq:LBwithxi}), (\ref{eq:sumfiles}) and (\ref{eq:constr2}), we obtain
\begin{eqnarray}
T(\Lc,\chi)&\geq &\sum_{i=0}^{\Lambda}x'_{i}c'_i \label{eq:compact2}\\
\sum_{i=0}^{\Lambda}x'_{i}&=&1 \label{eq:constr111}\\
\sum_{i=0}^{\Lambda}ix'_{i}&\leq& t .\label{eq:constr22}
\end{eqnarray}
Let $X$ denote a discrete integer-valued random variable with probability mass function $f_{X}(x)=\{x'_i ~\text{if}~x=i, \forall i\in \{0,1,\dots,\Lambda\}\}$, where the $x'_i$ are those that satisfy equation \eqref{eq:constr111}. The value $c'_i$ can also be seen as the realization of a random variable $Y\dfn g(X)$, where $g(x)=\frac{\sum_{r=1}^{\Lambda-x}\mathcal{L}_r{\Lambda-r\choose x}}{N_0{\Lambda\choose x}}$, having the same probability mass function as $X$, i.e. $f_{Y}(y)=\{x'_i ~\text{if}~y=c'_i, \forall i\in \{0,1,\dots,\Lambda\}\}$. Due to the equation in (\ref{eq:constr22}), the expectation of $X$ is bounded as $\mathbb{E}[X]\leq t$. Similarly, (\ref{eq:compact2}) is equivalent to $T(\Lc,\chi)\geq \mathbb{E}[Y]$. From Jensen's inequality, we have
$
T(\Lc,\chi)\geq \mathbb{E}[Y]\geq g(\mathbb{E}[X])
$.
Since the sequence $\{c'_i\}$ (and equivalently the function $g(x)$) is monotonically decreasing, the following lower bound holds
\begin{equation}
T(\Lc,\chi)\geq g(\mathbb{E}[X])\geq g(t)=\frac{\sum_{r=1}^{\Lambda-t}\mathcal{L}_r{\Lambda-r\choose t}}{N_0{\Lambda\choose t}}.
\end{equation}
This concludes the proof. \qed

\subsection{Proof of Equation (\ref{eq:totdelay2}) \label{sec:BinomialChangeProof}}
We remind the reader that (for brevity of exposition, and without loss of generality) this part assumes that the $|\Uc_{\lambda}|$ are in decreasing order.

We define the following quantity
\begin{equation*}
b_\lambda\dfn |\Uc_{1}|-|\Uc_{\lambda}|
\end{equation*}
and rewrite the total number of transmissions using the above definition as
\begin{align*}
&\sum_{j=1}^{|\Uc_{1}|}{{\Lambda \choose \Lambda\gamma+1}-{a_j \choose \Lambda\gamma+1}}\notag \allowbreak\\
&=|\Uc_{1}|{\Lambda \choose \Lambda\gamma+1}-\sum_{j=1}^{|\Uc_{1}|}{{a_j \choose \Lambda\gamma+1}}\notag \allowbreak\\
&=\sum_{i=1}^{\Lambda-\Lambda\gamma}{(|\Uc_{i}|+b_i){\Lambda-i \choose \Lambda\gamma}}-\sum_{j=1}^{|\Uc_{1}|}\sum_{i=\Lambda\gamma}^{a_j-1}{{i \choose \Lambda\gamma}}\allowbreak \notag \\
&=\sum_{i=1}^{\Lambda-\Lambda\gamma}{|\Uc_{i}|{\Lambda-i \choose \Lambda\gamma}}+{\sum_{i=\Lambda\gamma}^{\Lambda-1}{b_{\Lambda-i}{i \choose \Lambda\gamma}}}\allowbreak \notag\\
&-{\sum_{j=1}^{|\Uc_{1}|}\sum_{i=\Lambda\gamma}^{a_j-1}{{i \choose \Lambda\gamma}}}\allowbreak\notag
\end{align*}
\begin{align}
&\overset{(a)}{=}\sum_{i=1}^{\Lambda-\Lambda\gamma}{|\Uc_{i}|{\Lambda-i \choose \Lambda\gamma}}+{\sum_{i=\Lambda\gamma}^{\Lambda-1}{\sum_{j:a_j\geq i+1}^{|\Uc_{1}|}{i \choose \Lambda\gamma}}}\notag\allowbreak\\
&-{\sum_{j:a_j-1\geq \Lambda\gamma}^{|\Uc_{1}|}\sum_{i=\Lambda\gamma}^{a_j-1}{{i \choose \Lambda\gamma}}}\notag\allowbreak\\
&\overset{(b)}{=}\sum_{i=1}^{\Lambda-\Lambda\gamma}{|\Uc_{i}|{\Lambda-i \choose \Lambda\gamma}}+\sum_{j:a_j\geq \Lambda\gamma+1}^{|\Uc_{1}|}{\sum_{i=\Lambda\gamma}^{a_j-1}{i \choose \Lambda\gamma}}\notag \allowbreak\\
&-\sum_{j:a_j-1\geq \Lambda\gamma}^{|\Uc_{1}|}\sum_{i=\Lambda\gamma}^{a_j-1}{{i \choose \Lambda\gamma}}\notag\\
&=\sum_{i=1}^{\Lambda-\Lambda\gamma}{|\Uc_{i}|{\Lambda-i \choose \Lambda\gamma}}\label{eq:finalform}
\end{align}
where step $(a)$ uses the equality $b_{\Lambda-i}=\sum_{j:a_j\geq i+1}^{|\Uc_{1}|}{1}$, and where step $(b)$ follows by changing the counting order of the double summation in the second summand. Substituting (\ref{eq:finalform}) into the numerator of (\ref{eq:totdelay1}) yields the overall delivery time given in (\ref{eq:totdelay2}).
The same performance holds for any $\Uc$ with the same profile $\Lc$.
\qed

\subsection{Transition to the Multiple File Request Problem\label{sec:AppendixMultipleFileRequests}}
We here briefly describe how the converse and the scheme presented in the shared cache problem, can fit the multiple file request problem.
\paragraph*{Converse}
In Remark~\ref{rem:multipleFilerequstsResult} we described the equivalence between the two problems. Based on this equivalence, we will describe how the proof of the converse in Section~\ref{sec:converse} holds in the multiple file request problem with $N_0=1$, where now simply some terms carry a different meaning. Firstly, each entry $\boldsymbol{d_{\lambda}}$ of the vector defined in equation \eqref{eq:OrderDemand2} now denotes the vector of file indices requested by user $\lambda$. Then we see that Lemma~\ref{lem:cons_acyclic} (proved in Section~\ref{proof:cons_acyclic}) directly applies to the equivalent index coding problem of the multiple file requests problem, where now, for a given permutation $\sigma$ (see Section~\ref{proof:cons_acyclic}), all the subfiles placed in row $\lambda$  --- i.e., subfiles $W^{\boldsymbol{d_{\sigma(\lambda)}}(j)}_{\Tau_\lambda},\forall j\in\mathcal{U}_{\sigma(\lambda)}$ for all $\Tau_\lambda\subseteq[\Lambda]\setminus\{\sigma(1),\dots,\sigma(\lambda)\}$ --- are obtained from different files requested by the same user, and therefore any two of these subfiles/nodes are not connected by any edge in the side information graph. After these two considerations, the rest of the proof of Lemma~\ref{lem:cons_acyclic} is exactly the same. The remaining of the converse consists only of mathematical manipulations which remain unchanged and which yield the same lower bound expression.

\paragraph*{Scheme}
The cache placement phase is identical to the one described in
Section~\ref{sec:SchemePlacement}, where now each cache $\lambda$ is associated to the single user $\lambda$.
In the delivery phase, the scheme now follows directly the steps in
Section~\ref{sec:SchemeDelivery} applied to the shared-link (single antenna) setting, where now $\mathcal{A}_{\lambda}=\mathcal{U}_{\lambda}$ (cf.~\eqref{eq:Alambda}). As in the case with shared caches, the scheme consists of $\mathcal{L}_1$ rounds, each serving users
\begin{equation}\label{eq:UsersServerPerRound2}
\mathcal{R}_j=\bigcup_{\lambda\in[\Lambda]} \big( \mathcal{U}_{\lambda}(j):\mathcal{L}_\lambda \geq j \big)
\end{equation} where $\mathcal{U}_{\lambda}(j)$ is the $j$-th user in set $\mathcal{U}_{\lambda}$. The expression in~\eqref{eq:UsersServerPerRound2} now means that the multiple files requested by each user are transmitted in a \emph{time-sharing} manner, and at each round the transmitter serves at most one file per user. Next, equation \eqref{eq:UsersServedPerXOR} is replaced by
\begin{equation}
\chi_\mathcal{Q}=\bigcup_{\lambda\in \mathcal{Q}}\big( \mathcal{U}_{\lambda}(j):\mathcal{L}_\lambda \geq j \big)
\end{equation}
and then each transmitted vector described in equation~\eqref{eq:TransmitSignalGeneral}, is substituted by the scalar
\begin{equation}
x_{\chi_{\mathcal{Q}}}=\!\!\!\!\bigoplus_{\lambda\in \mathcal{Q}:\mathcal{L}_\lambda \geq j} W^{d_{\mathcal{U}_{\lambda}(j)}}_{\mathcal{Q}\backslash{\{\lambda\}},1}.
 \end{equation}
Finally decoding remains the same, and the calculation of delay follows directly.

\bibliography{final_refs2}
\bibliographystyle{IEEEtran}
\end{document}